\tikzset{
GFleche/.style={
 draw=black,
 postaction={decorate},
 decoration={markings,mark=at position 0.5 with {\arrow{<}}}
 },
DFleche/.style={
 draw=black,
 postaction={decorate},
 decoration={markings,mark=at position 0.5 with {\arrow{>}}}
 }}
\def\beq{\begin{eqnarray}}
\def\eeq{\end{eqnarray}}
\def\supp{\textrm{supp}}
\newcommand{\kms}[1]{\omega^{\beta}(#1)}
\def\supp{\textrm{supp}}
\newcommand{\inverse}[1]{#1^{-1}}
\newcommand{\dvol}[1]{\textrm{d}#1}
\newcommand{\translation}[1]{\alpha_{#1}}
\newcommand{\vettore}[1]{{\bf #1}}
\newcommand{\frequenza}[1]{\omega_{#1}}
\def\remark {\vsp\ifhmode{\par}\fi\noindent\noindent {\bf Remark:} 
}
\newtheorem{theorem}{Theorem}[section]
\newtheorem{proposition}[theorem]{Proposition}
\begin{document} 
% 
%\hfill{\sl Preprint UTM XYZ - February 2013} 
%\par 
%\bigskip 
%\par 
%\rm 
% 
%%%%%%%%%%%%%   Title %%%%%%%%%%%%%%%%%%%%%%%%%% 
 
\par 
\bigskip 
\LARGE 
\noindent 
{\bf On the stability of KMS states in perturbative algebraic quantum field theories} 
\bigskip 
\par 
\rm 
\normalsize 
 
%%%%%%%%%%%%%%%%%%%%%%%%%%%%%%%%%%%%%%%%%%%%%
%%%%%%%%%%%% Authors %%%%%%%%%%%%%%%%%%%%%%%% 

\large
\noindent 
{\bf Nicol\`o Drago$^{1,2,a}$}, {\bf Federico Faldino$^{1,b}$}, {\bf Nicola Pinamonti$^{1,2,c}$} \\
\par
\small
\noindent$^1$ Dipartimento di Matematica, Universit\`a di Genova - Via Dodecaneso 35, I-16146 Genova, Italy. \smallskip

\noindent$^2$ Istituto Nazionale di Fisica Nucleare - Sezione di Genova, Via Dodecaneso, 33 I-16146 Genova, Italy. \smallskip
\smallskip

\noindent E-mail: 
$^a$drago@dima.unige.it, 
$^b$faldino@dima.unige.it,
$^c$pinamont@dima.unige.it\\ 

\normalsize

\par 
 
\rm\normalsize 

%\linespread{1.5} 
\rm\normalsize 
 
%%%%%%%%%%%% Date %%%%%%%%%%%%%%%%%%%%%%%%%%  
%%%%%%%%%%%% Date %%%%%%%%%%%%%%%%%%%%%%%%%% 
 
\par 
\bigskip 
 
\rm\normalsize 
%\noindent {\small Version of \today}

\par 
\bigskip

%\linespread{1.5} 
\rm\normalsize 

\noindent
{\it  Dedicated to the memory of Rudolf Haag  %(1922-2016)
}

\bigskip

\noindent 
\small 
{\bf Abstract}.
We analyze the stability properties shown by KMS states for interacting massive scalar fields propagating over Minkowski spacetime, recently constructed 
in the framework of perturbative algebraic quantum field theories by Fredenhagen and Lindner \cite{FredenhagenLindner}.  
In particular, we prove the validity of the return to equilibrium property when the interaction Lagrangian has compact spatial support. 
Surprisingly, this does not hold anymore, if the adiabatic limit is considered, namely when the interaction Lagrangian is invariant under spatial translations.
Consequently, an equilibrium state under the adiabatic limit for a perturbative interacting theory evolved with the free dynamics does not converge anymore to the free equilibrium state. Actually, we show that its ergodic mean  converges to a non-equilibrium steady state for the free theory.
 
\normalsize

\vskip .3cm

\section{Introduction}

Recently, KMS states for interacting massive scalar quantum field theories propagating over a flat spacetime have been constructed by Fredenhagen and Lindner in \cite{FredenhagenLindner}, see also \cite{Lindner}. 
In those works, it is shown how it is possible to overcome the known infrared problems \cite{SteinmannThermal, Altherr, Landsman} when the adiabatic limit is considered. This was possible combining ideas and methods proper of statistical mechanics with new developments in the perturbative treatment of interacting field theories \cite{BF00, HW01, HW02, BFV, DF,BDF, FredenhagenRejzner, FredenhagenRejzner2}.
Actually, the methods used for $C^*-$dynamical system, see e.g. \cite{BR}, were extended to field theories treated with perturbative methods. The crucial point was the use of the time slice axiom to restrict the observables of the theory to be supported 
in a time-like compact neighborhood of a Cauchy surface in Minkowski space.  
A proof of the validity of the time-slice axiom in the context of perturbative construction of field theories has been given in \cite{CF}.
Although, in the case of field theories, it is not possible to restrict the interacting Hamiltonian to a fixed time Cauchy surface, 
it is possible to compare the free and interacting time evolution to obtain the cocycle which intertwines the two. The generator of this cocycle is a time smeared Hamiltonian.
 Having this generator at disposal the interacting KMS state can be given generalizing the construction proposed by Araki \cite{Araki} to the context of time averaged Hamiltonian. 

At first, this state is constructed for an interaction Lagrangian which has compact spatial support, this is realized multiplying the Lagrangian density by a spatial cut off function $h$. Eventually, the adiabatic limit, namely the limit where $h$ tends to $1$, is considered.  
In the case of massive theories, one obtains finite results thanks to the exponential decay of the truncated free correlation functions for large spatial directions.
If the background theory is massless and if effectively the interaction Lagrangian has a positive quadratic contribution, like the case of the thermal mass for $\lambda \phi^4$ theories, the same result can be achieved modifying the background theory by moving a quadratic contribution from the interaction Lagrangian to the free part.
 This procedure amounts to a partial resummation of the perturbative series and it can be done thanks to the  generalized principle of perturbative agreement discussed in \cite{DHP}.

In this paper, we analyze the stability of the KMS state for perturbative interacting scalar field theories constructed by \cite{FredenhagenLindner}.
In the context of $C^*-$dynamical systems, a notion of stability under perturbation was formalized by Haag, Kastler and Trych-Pohlmeyer \cite{HKT}, see also the review presented in \cite{Haag}. 
In particular, having an equilibrium state $\omega$ with respect to a free evolution described by a one parameter group of automorphisms $t\mapsto\alpha_t $ over a $C^*-$algebra $\mathscr{A}$, $\omega$ is said to be dynamically stable if it is sufficiently close to an equilibrium state $\omega^P$ of the perturbed dynamics $\alpha^P$, where $P=P^*\in \mathscr{A}$. More precisely, $\omega^{P}$ must be contained in the folium of $\omega$ and 
$\omega^{\lambda P} \to \omega$ for $\lambda \to 0$. In \cite{HKT} it is furthermore shown that if two states $\omega$ and $\omega^P$, which are invariant respectively under $\alpha$ and $\alpha^P$, are close in the previously discussed sense, the following {\bf stability condition} holds
\begin{equation}\label{eq:stability-condition-HKT}
\lim_{T\to\infty} \int_{-T}^T \dvol{t}\; \omega([P,\alpha_t(A)]) = 0
\end{equation}
for $A\in\mathscr{A}$.
The line of argument can also be reverted, actually in \cite{HKT} it is proved that if an $\alpha$-invariant state $\omega$ satisfies the condition \eqref{eq:stability-condition-HKT} for sufficiently many $A$ and $P$ in $\mathscr{A}$ and if further ergodicity assumptions hold, then $\omega$ is a KMS state at some inverse temperature $\beta$ with respect to $\alpha$. 

Robinson \cite{Robinson} proved that a state $\omega^P$ invariant under the perturbed dynamics $\alpha^P$ evolved with the free dynamics tends to an $\alpha-$invariant state $\omega$
\[
\lim_{t\to\pm \infty} \omega^{P}(\alpha_t(A)) = \omega(A)
\]
if and only if the 
\begin{equation}\label{eq:stability-bratteli}
\lim_{T\to\infty} \int_{-T}^T \dvol{t}\; \omega^P([P,\alpha_t(A)]) = 0
\end{equation}
for every $A\in\mathscr{A}$. Notice that the stability condition $\eqref{eq:stability-bratteli}$ can be seen as a first order version of 
\eqref{eq:stability-condition-HKT} for $\omega^P$ and $\alpha^P$, in the sense of perturbation theory.

Afterwards, Bratteli, Kishimoto and Robinson \cite{BrKiRo} showed that if a KMS states  $\omega$ with respect to the evolution $\alpha_t$ is {\bf strongly clustering}, i.e. 
\[
\lim_{t\to\pm \infty} \omega(A\alpha_t(B))=\omega(A) \omega(B),
\]
for $A,B\in\mathscr{A}$, the stability condition \eqref{eq:stability-condition-HKT} holds for every $P\in\mathscr{A}$. This shows the connection of the clustering condition and the stability condition. 

Here, we work with perturbative interacting scalar field theories propagating over a flat spacetime, the observables of the theory are thus known as formal power series in the coupling constants of the interaction Lagreangean, hence, we have to extend some of the conditions and methods described above.
We shall perform this analysis using methods proper of the recently developed perturbative algebraic quantum field theory (pAQFT) \cite{BDF,  FredenhagenRejzner, FredenhagenRejzner2}. In particular, we prove the validity of the clustering condition in the sense of formal power series for the perturbatively constructed 
KMS state $\omega^{\beta,V}$ if the interaction Lagrangian $V$ has compact spatial support. Furthermore, we show that, condition \eqref{eq:stability-bratteli} holds for $\omega^{\beta}$ evolved with the perturbed dynamics $\alpha^V$. Hence, we get the {\bf return to equilibrium} property for perturbative interacting field theories
\begin{equation}\label{eq:retrun-to-equilibrium}
\lim_{T\to\infty} \omega^\beta(\alpha^V_T(A)) =\omega^{\beta,V}(A)
\end{equation}
where $\omega^{\beta}$ is the extremal KMS state of the free theory, in close analogy to the results of Robinson \cite{Robinson}. Afterwards, we see that the strong clustering condition ceases to hold if $A$ has no compact spatial support. This has the effect of breaking the return to equilibrium property in the case of interaction Lagrangian which are constant in space, namely if the adiabatic limit is considered. Actually, in the analysis of $\omega^\beta\circ\alpha^V_T$ for large $T$ we encounter some divergences. 
Thus, in the last part of the work, we revert the point of view and we prove that the ergodic mean of $\omega^{\beta,V}\circ \alpha_T$ converges in the sense of formal power series to a state $\omega^+$ which can be seen as a {\bf non-equilibrium steady state} (NESS) \cite{Ru00} for the free theory.
However, a direct evaluation of {\bf entropy production}, originally given  in the case of $C^*-$dynamical systems by  Ojima, Hasegawa and  Ichiyanagi  \cite{Oj0, Oj1, Oj2} and by Jak{\v s}i\'c and Pillet \cite{JP01, JP02}, seems to give divergent results. We argue, in the last section, that this situation should be improved when densities are considered. 

The paper is organized as follows. In the next section, in order to fix the notation, we briefly review the construction of KMS states for interacting field theories in the framework of pAQFT. In the third section we discuss the validity of various clustering conditions for KMS states under perturbations of compact spatial support. Out of these conditions it is possible to prove the return to equilibrium \eqref{eq:retrun-to-equilibrium}.
In the forth section we show that the clustering condition does not hold under the adiabatic limit. 
In the fifth section we show how the failure of the clustering condition, can be used to construct perturbatively non-equilibrium steady states for the free theory.
Finally, some conclusions and some open questions are presented in the last section and few technical results are collected in the appendix.

\bigskip

\section{KMS states in the framework of pAQFT}

\subsection{pAQFT: functional approach for interacting scalar field theories}

In the functional approach to quantum field theories \cite{BF09,BDF,FredenhagenRejzner,FredenhagenRejzner2}, observables are described by functionals over field configurations. In the case of a scalar field theory, as the one we are considering in this paper, the {\bf field configurations} are assumed to be real smooth functions over the spacetime manifold $M$ and are indicated by $\phi\in \mathscr{C}:=C^\infty(M;\mathbb{R})$.  
In the analysis presented in this paper $M$ is the flat Minkowski spacetime whose metric $\eta$ is assumed to have the signature $(-+++)$.
Despite this fact some of the general definitions we are going to discuss hold also in the case of a generic globally hyperbolic spacetime. 
Coming back to the main issue, observables of the theory are functionals which have further regularity, in particular, they are considered to be {\bf smooth}, {\bf compactly supported} and {\bf microcausal}. 
	The set of such observables is denoted by $\mathscr{F}_{\mu c}$.

We recall briefly that a functional $F$ over $\phi\in\mathscr{C}$ is smooth if the functional derivatives $\frac{\delta^n F}{\delta \phi^n}$ at any order $n$  exist in $\mathcal{D}'(M^n)$. A smooth functional is compactly supported if all its functional derivatives are distributions of compact support. 
Finally, a smooth compactly supported functional is microcausal if and only if its wave front set is such that
\[
\text{WF}\left(\frac{\delta^n F}{\delta \phi^n}\right)\cap \{(x_1, \dots, x_n;k_1, \dots, k_n) | (k_1, \dots, k_n) \in (\overline{V}^n_+ \cup \overline{V}^n_-)_{(x_1, \dots  , x_n)}\} = \emptyset,
\]
where $\overline{V}^n_\pm $ are the closed future/past light cones in the cotangent bundle $T^*M^n$ with respect of the metric $g$. 

Among the set of all possible observables, the elements which play a special role are the local ones, namely 
		\[
		\mathscr{F}_{loc} = \left\{\left.  F\in \mathscr{F}_{\mu c} \right| \supp F^{(n)} \subseteq \text{Diag}_n     \right\}
		\] 
where $\text{Diag}_n=\{(x_1,\dots, x_n)\in M^n | x_1=\dots= x_n\}$.		
		If we equip $\mathscr{F}_{\mu c}$ with the pointwise product $F\cdot G (\phi) = F(\phi)G(\phi)$ and with a $*-$operation implemented by the complex conjugation $F^*(\phi) = \overline{F(\phi)}$ we obtain a (commutative) $*-$algebra, which can be interpreted as the $*-$algebra of classical observables.

The dynamics of free field theories is described by $P\phi = 0$, where $P$ is a linear hyperbolic differential operator which is $P=-\Box + m^2$ in the massive Klein Gordon case discussed here. At algebraic level, it can be implemented quotienting out the ideal generated by linear functionals of the form 
$F(P\phi)$. However, perturbation theory does not respect this quotient, hence we shall work with the off shell algebra. The equation of motion are eventually taken into account requiring the correlation functions of the considered state to be on shell. 
	
\bigskip	
	
	The quantization of free (linear) theories is realized by formal deformation of the pointwise product into a {\bf $\star-$product}. The obtained elements are then formal power series in $\hbar$ with values in $\mathscr{F}_{\mu c}$. In particular, the product we are considering is 
	\begin{equation}\label{eq:product}
	F\star_\omega G := e^{\hbar \langle \omega,\frac{\delta^2}{\delta \phi\delta \phi'} \rangle }\left. F(\phi) G(\phi')\right|_{\phi'=\phi}
	\end{equation}
	 where $\omega$ is an {\bf Hadamard bi-distribution}, namely the two-point function of an Hadamard state, associated with the free dynamics. We recall that the two-point function is Hadamard if its wave front set (WF) is microcausal, if its antisymmetric part is $i/2$ times the causal propagator and if it solves the linear equation of motion up to smooth terms \cite{Radzikowski}, see also \cite{BFK}. In this way, we have that the canonical commutation relations are satisfied also in the off shell algebra. In particular, considering {\bf local linear fields}, which are functionals of the form $F_f(\phi):=\int \phi f d\mu_g$ where $f\in \mathcal{D}(M)$ and thus they are linear in the field configurations, it holds that 
	 \[
	 [F_f,F_h]_\star := F_f \star F_h - F_h \star F_f = i\hbar \Delta(f,h), \qquad f,h\in \mathcal{D}(M)
	 \]  
	where $\Delta$ is the canonical commutator function namely the advanced minus retarded fundamental solutions of the operator $P$. Hence, the formal power series in $\hbar$ with coefficient in $\mathscr{F}_{\mu c}$, forms the algebra of the observables of the free quantum theory\footnote{The set of formal power series with coefficients in $\mathscr{F}_{\mu c}$ are usually denoted by $\mathscr{F}_{\mu c}[[\hbar]]$, if not strictly necessary we shall denote them as $\mathscr{F}_{\mu c}$.}.

The Hadamard bidistributions are uniquely determined by the WF condition only up to smooth bidistributions. However, this ambiguity in the definition of the $\star$ product is harmless in the construction of the algebra of free quantum observables because different algebras are $*-$isomorphic. The $*-$isomorphism of $(\mathscr{F}_{\mu c},\star_\omega)$ to $(\mathscr{F}_{\mu c},\star_{\omega+w})$ is realized by 
\begin{equation}\label{eq:gamma}
\gamma_w(F) := e^{\hbar \langle w,\frac{\delta^2}{\delta \phi^2} \rangle } F.
\end{equation}
Local functionals which are not linear in the field configurations, are not invariant under the action of this isomorphisms, hence local observables acquire different meaning depending on the choice of $\omega$. This freedom is covered by usual renormalization freedom present in the definition of Wick polynomials \cite{HW01,HW02}.

\bigskip

In order to construct perturbatively the interacting algebra of fields we have to consider another product. Namely, time-ordered products among local fields. 
The time order product of local fields is a map $T: \mathscr{F}_{loc}^{\otimes n} \to \mathscr{F}_{\mu c}$ which satisfies certain axioms
\cite{BF00, HW02}. A detailed analysis can be found in \cite{HW02}, here we would like to discuss only the {\bf causal factorization} property which says that 
\[
T(A,B) = T(A)\star T(B), \qquad \text{if}\qquad A\gtrsim B, \qquad T(A,B)=T(B)\star T(A) , \qquad \text{if} \qquad B\gtrsim A
\]
where $A\gtrsim B$ holds if there exists a Cauchy surface $\Sigma$ such that $\supp(A)\subseteq J^+(\Sigma)$ and $\supp(B)\subseteq J^-(\Sigma)$.
This property fixes the time ordering of $n$ local fields everywhere in $M^n$ up to the diagonals.  
Epstein-Glaser \cite{EG} recursive  construction of the time ordered product together with Steinmann \cite{Steinmann} scaling limit techniques to extend the obtained distribution on the full diagonal at each recursive step furnishes a concrete realization of the time ordered products, see e.g. \cite{BF00, HW02}.
With the time ordered products at disposal we are able to construct the time ordered exponentials of a local field $V$ as a formal power series in the coupling constant $\lambda$
\[
S(V):= \sum_n 
\frac{1}{n!}\left(\frac{i\lambda}{\hbar}\right)^n
T\left( \underbrace{T^{-1}(V)\otimes \dots \otimes T^{-1}(V)}_{n}\right)\,,
\]
The generators of the algebra of interacting fields are represented on the free algebra of fields by means of the {\bf Bogoliubov map}, namely by taking the functional derivative of the {\bf relative $S$ matrix} $\mathscr{S}_V(F)$, i.e. 
\begin{equation}\label{eq:bogoliubov-map}
\mathscr{R}_V (F) := 
\frac{\textrm{d}}{\textrm{d}\zeta}\mathscr{S}_V(F)\bigg|_{\zeta = 0}:=
\frac{\textrm{d}}{\textrm{d}\zeta}S(V)^{-1}\star S\left(V+\frac{\hbar \zeta}{i\lambda}  F\right)\bigg|_{\zeta = 0}=
S(V)^{-1}\star
T(S(V),F)\,.
\end{equation}
This map is well defined when $F$ is a local functional or a time ordered product of local functionals and thus it is well defined on the generators of the interacting algebra. Hence, even if it cannot be inverted on $\mathscr{F}_{\mu c}$, it can be used to represent the interacting $*-$algebra in the free one.
We shall thus consider the functional of the interacting algebra as the subalgebra $\mathscr{F}_I\subset \mathscr{F}_{\mu c}$ generated by the image of the local functionals under the Bogoliubov formula.
In other words, to a generic interacting local functional $F_I$ we associate $\mathscr{R}_V (F_I)$ (or $\mathscr{S}_V(F)$) which is an element of $\mathscr{F}_{\mu c}$.
	The set $\mathscr{R}_V(\mathscr{F}_{loc})\subset{\mathscr{F}_{\mu c}}$ (or the set $\mathscr{S}_V(\mathscr{F}_{loc})$) of all possible local interacting fields generates a representation of the interacting algebra over the free one.

\bigskip	 
States for the interacting algebra are constructed prescribing the form of the correlation functions among local interacting fields. It is thus sufficient to consider
\[
\omega^I(F_1,\dots, F_n)  := \omega(\mathscr{R}_V (F_1)\star\dots \star \mathscr{R}_V (F_n)) , \qquad F_i\in \mathscr{F}_{loc}. 
\]	
We conclude this section recalling that the {\bf time evolution} of the observables of the free algebra is described by a one parameter group of $*-$automorphisms $\alpha_t$. In the case of fields propagating over a Minkowski spacetime, the action of $\alpha_t$ on an element is 
\[
\alpha_t (F)(\phi) :=  F_t(\phi):= F(\phi_t), \qquad \phi_t(\tau,\vettore{x}) := \phi(\tau+t,\vettore{x})
\]
where we used standard Minkowski coordinates, adapted to $\alpha_t$, to parametrize the point $x=(t,\vettore{x})$ of the spacetime.
The {\bf interacting time evolution} is thus obtained on $\mathscr{F}_I$ employing again the Bogoliubov map. In particular, its action on a generator $\mathscr{R}_V (F)$ of $\mathscr{F}_I$ with $F\in\mathscr{F}_{loc}$ is obtained by pullback 
\[
\alpha_t^V (\mathscr{R}_V (F)) :=  \mathscr{R}_V (\alpha_t F).
\]
In the rest of the paper, we shall set $\hbar=\lambda=1$. In few cases, namely when it is necessary to fix the order in perturbation theory, we shall keep the coupling constant $\lambda\neq 1$.

\subsubsection{Adiabatic limit and local interacting potential}\label{se:cutoffs}

On Minkowski spacetime the interacting Lagrangian density $\mathcal{L}_I$ or the interacting Hamiltonian density $\mathcal{H}_I$ of the theory we would like to construct should be invariant under space and time translations\footnote{We say that $\mathcal{L}_I$ is invariant under translations if $\mathcal{L}_I(\phi)(x+y)=\mathcal{L}_I(\phi_y)(x)$ where $\phi_y(x) = \phi(x+y)$.}. We recall that
\[
V = \int \mathcal{L}_I \dvol{\mu} = -  \int \mathcal{H}_I \dvol{\mu}.
\]
where $d\mu$ denotes the measure induced by the metric $\eta$.
Unfortunately, this is in contrast with the request for $V$ to be a local field, because of the non-compact support of such $V$s. 

This clash can be overcome by means of a version of the adiabatic limit originally introduced by Hollands and Wald in \cite{HW03}. The starting observation is in the fact that, because of the time slice axioms, which holds also for perturbative theories  \cite{CF}, the algebra of observables, and thus a state, is completely determined once the expectation values of observables supported in a neighborhood of a Cauchy surface are known. 
We can thus choose $\Sigma_\epsilon=\{(t,{\bf x})| -\epsilon <t<\epsilon\}$ an $\epsilon-$neighborhood of the $t=0$ Cauchy surface, hence, the algebra of interacting fields is generated by $\mathscr{S}_V(F)$ with $F$ supported on $\Sigma_\epsilon$.
Because of the causal factorization property, it is enough to have an interacting Lagrangian which is constant on $\Sigma_\epsilon$. 

The causal factorization property satisfied by the $S$ matrix permits to restrict the support of the interaction potentials in a slightly larger neighborhood, say $\Sigma_{2\epsilon}$. Consider $V_g = \int g \mathcal{L}_I d\mu$ with $g\in \mathcal{D}(M)$. If $F$ is supported in $\mathcal{O}$, $\mathscr{S}_{V_g}(F)$ is equal to $\mathscr{S}_{V_{g'}}(F)$ whenever $g-g'$ is supported outside $J^-(\mathcal{O})$. Furthermore, if $g-g'$ is supported outside $\mathcal{O}$ the difference of the two relative $S-$matrices is taken into account by a unitary operator, actually,  
\[
\mathscr{S}_{V_{g'}}(F) = W(g',g)\star\mathscr{S}_{V_g}(F)\star W(g',g)^{-1}, \qquad 
 W(g',g) := \mathscr{S}_{V_g}(V_g-V_{g'})^{-1}.
 \]
This observation permits to construct the algebra of interacting observables by means of the algebraic adiabatic limit of $g\to 1$, namely, considering some sequence of compactly supported cut off functions $g_i$ which are equal to one on larger and larger compact regions \cite{BF00}. 
Even if this construction is always well defined at the algebraic level, the existence of a state in the limit needs to be discussed case by case.

Following \cite{FredenhagenLindner}, in order to construct states of quantum interacting scalar field theories in the adiabatic limit, 
we thus proceed that way, first of all we consider an interacting potential which is supported on compact region introducing both a temporal cutoff $\chi$ and a spatial cutoff $h$ such that $g(t,\vettore{x})=\chi(t)h(\vettore{x})$ and eventually we remove the spatial cutoff $h$ taking the limit $h\to 1$. 
Since the algebra we are considering is supported in $\Sigma_\epsilon$ an $\epsilon-$neighborhood of the Cauchy surface $\Sigma$, the temporal cutoff $\chi$ is a time compact smooth function chosen to be equal to $1$ on $\Sigma_\epsilon$ and $0$ outside $\Sigma_{2\epsilon}$, while $h$ is space compact functions equal to $1$ on a region where the considered observables are supported.
The potential we are thus considering is of the form
\begin{equation}\label{eq:potential-cutoffs}
V_{\chi, h} = \int \chi h \mathcal{L}_I \dvol{\mu}.
\end{equation}
In order to simplify the notation, we shall often skip the subscripts $\chi$  and $h$. 

The adiabatic limit discussed so far gives a well defined state, if the correlation functions of the theory converge.  Due to the spatial decay properties of the KMS state of the free massive Klein Gordon theory this construction furnishes a well defined KMS state \cite{FredenhagenLindner}. The same procedure can be employed for $\lambda\phi^4$ theories over massless background combining the principle of perturbative agreement with the observation that an effective thermal mass is present in $V$, see also \cite{DHP}.

\subsection{KMS states and the adiabatic limit}

Following \cite{FredenhagenLindner}, when the interaction Lagrangian is of the form \eqref{eq:potential-cutoffs}, a KMS state of an interacting field theory can be constructed starting from $\omega^\beta$, the extremal KMS state of inverse temperature $\beta$ with respect to $\alpha$ of the free theory, as
\begin{equation}\label{eq:int-impl}
\omega^{\beta,V}(A) = \frac{\omega^\beta(A \star U_V(i\beta))}{\omega^\beta (U_V(i\beta))}
\end{equation}
which is well defined because
\[
t\mapsto \frac{\omega^\beta(A \star U_V(t))}{\omega^\beta (U_V(t))}
\]
is analytic in some strip so the analytic continuation is well defined.
Furthermore, $U_V$ is the intertwiner between the free and the interacting dynamics and it is constructed for every $t>0$ as
\begin{equation}\label{eq:intertwiner}
U_V(t) := 1+\sum_{n=1}^\infty i^n\int_0^t\dvol{t_1} \int_0^{t_1}\dvol{t_2}\dots \int_0^{t_{n-1}}\dvol{t_n}\; \alpha_{t_n}(K_h^\chi)\star \dots \star\alpha_{t_1}(K_h^\chi).
\end{equation}
The previous formula is obtained starting from the generator
\begin{equation}\label{eq:time-average-hamiltonian}
K_h^\chi := \mathscr{R}_V(H(h\dot\chi_-)), \qquad  H(h\dot\chi_-)   := \int h \dot\chi_- \mathcal{H}_I    \dvol{\mu} 
\end{equation}
where $\dot\chi_-$ is equal to $\dot\chi$ in the past of $\Sigma_\epsilon$ and it is set to be equal to $0$ in the future of $\Sigma_\epsilon$.
In \cite{FredenhagenLindner}, this generator was obtained comparing the free dynamics $\alpha_t$ and the interacting dynamics  $\alpha_t^V$ for small times and showing that $U_V$ constructed as in \eqref{eq:intertwiner} is the intertwiner of both dynamics at every time
\begin{equation}\label{eq:intertwine-alphas}
\alpha^V_t(A) = U_V(t)\star\alpha_t(A)\star U_V(t)^{-1}
\end{equation}
for $A\in\mathscr{F}_{\mu c}(\mathcal{O})$ with $\mathcal{O} \subset \Sigma_\epsilon$, as in section \ref{se:cutoffs}.
Furthermore, $U_V$ in \eqref{eq:intertwiner} satisfies the following cocycle condition
\begin{equation}\label{eq:cocycle}
U_V(t+s)=U_V(t) \star\alpha_t(U_V(s)).
\end{equation} 
Notice that $U_V$ is a formal power series where also $K_h^\chi$ is a formal power series. It can be seen as the time evolution operator in the interaction picture.
Furthermore, having the time averaged Hamiltonian \eqref{eq:time-average-hamiltonian} at disposal, the interacting dynamics can be explicitly written as
\begin{equation} \label{eq:pert-dyn}
\alpha_t^V(A)=
\alpha_t(A)+
\sum_{n\geq 1}i^n\int_{t S_n}\dvol{t_1}\dots\dvol{t_n}[\alpha_{t_1}(K_h^\chi),[\ldots [\translation{t_n}(K_h^\chi),\alpha_t{(A)}] \ldots]]
\end{equation}
where as usual $S_n:=\{(t_1,\dots,t_n)\in \mathbb{R}^n| 0\leq t_1\leq \dots \leq t_n\leq 1\}$ denotes the $n$-dimensional simplex.

We recall that the expectation value on $\omega$ of the $\star$ product of $n$ elements $F_i\in \mathscr{F}_{\mu c}$ can be decomposed on the {\bf connected part} $\omega^c$ of $\omega$ as follows
\begin{equation}\label{eq:connected-part}
\omega(F_1\star \dots   \star F_n) = \sum_{P\in \text{Part} \{1, \dots, n\}}  \prod_{I\in P} \omega^c\left(\bigotimes_{i\in I} F_i \right),
\end{equation}
where $\text{Part} \{1, \dots, n\}$ indicates the set of partitions of $\{1,\dots, n\}$ and the elements in the tensor product are ordered preserving the order of $\{1,\dots, n\}$. 
The connected part $\omega^c$ of $\omega$ is thus a map from the tensor algebra over $\mathscr{F}_{\mu c}$ to $\mathbb{C}$.

Employing the connected part $\omega^{\beta,c}$ of $\omega^\beta$, the interacting KMS state \eqref{eq:int-impl} can be written as a sum of integrals over $n-$dimensional simplexes $S_n$
\begin{gather}
\label{eq:interacting-KMS}
\omega^{\beta,V}(A)=
\sum_{n\geq 0}(-1)^n\int_{\beta S_n}\dvol{U}\omega^{\beta,c}\left(
A\otimes\bigotimes_{k=1}^n\alpha_{iu_k}(K)
\right)
\end{gather}
where the analytic properties of the connected $n-$point functions are used to give meaning to the analytic continuation written above. In 
\cite{FredenhagenLindner}, it is shown that the limit $h\to1$ of the previous expression can be taken in the sense of van Hove.

\section{Stability and KMS condition}

We would like to study the stability of the KMS state of the free theory under perturbations described by the potential $V_{\chi,h}$ in \eqref{eq:potential-cutoffs}. 
To this end we recall that $\omega^\beta$, the extremal and quasi-free KMS state of the free theory with respect to the time evolution $\alpha_t$ at inverse temperature $\beta$, is a quasi-free state completely determined by the two-point function
\begin{equation}\label{eq-KMS-free}
\omega_2^\beta(x,y)  = \frac{1}{(2\pi)^3}\int\dvol{^3\vettore{k}}\left(b_+(k)\frac{e^{i\frequenza{\vettore{k}}(t_x-t_y) }}{2\frequenza{\vettore{k}}}+b_-(k)\frac{e^{-i\frequenza{\vettore{k}}(t_x-t_y) }}{2\frequenza{\vettore{k}}}\right)e^{-i\vettore{k}\cdot(\vettore{x}-\vettore{y})}
\end{equation}
where, $\frequenza{\vettore{k}}=\sqrt{\vettore{k}^2 + m^2}$ and $b_+(k)=\inverse{(1-e^{-\frequenza{\vettore{k}}\beta})}$, $b_-(k)=e^{-\beta \frequenza{\vettore{k}}}b_+(k)$.

Notice that the free KMS state $\omega^\beta$ induces a state on the interacting algebra  $\mathscr{F}_I$ which we recall is generated by the elements of $\mathscr{F}_{\mu c}$ obtained applying the Bogoliubov map \eqref{eq:bogoliubov-map} to local fields. Notice that, in order to simplify the computation of expectation values, we shall use the representation of the free algebra where the $\star_{\omega^\beta}$ product is employed. 
We shall now check if the free KMS state $\omega^\beta$ is stable under perturbation of the dynamics at least in the asymptotic regime. This analysis is performed extending the results of Bratteli, Kishimoto and Robinson \cite{BrKiRo} to the case of perturbative quantum field theories.
Namely, we would like to check if the return to equilibrium \eqref{eq:retrun-to-equilibrium} ($\lim_{T\to\infty} \omega^\beta({\translation{T}^V(A)}) = \omega^{\beta,V}(A)$) holds.
Similar results in the context of $C^*$-algebras \cite{BrKiRo} suggest that the limit should be $\omega_\beta^V(A)$, i.e. the interacting KMS state with respect to the translation $\alpha_t^V$. The generalization of these fact needs to be carefully addressed due to the weaker convergence conditions present in our case, actually, the elements of the interacting algebras are known only as formal power series.
As a preliminary step we would like to show that $\omega^\beta$ satisfies a clustering condition. We have actually the following proposition
\begin{proposition}\label{pr:clustering} (Clustering condition for $\alpha_t$) 
Consider $A$ and $B$ two elements of $\mathscr{F}_I(\mathcal{O})$ where the interacting potential is $V_{\chi,h}$ ($h$ is of compact spatial support, $\chi=1$ on $\mathcal{O}$). 
It holds that 
\[
\lim_{t\to\infty} \omega^\beta(A\star \alpha_{t}(B)) = \omega^\beta(A)\omega^\beta(B) 
\]
in the sense of formal power series in the coupling constant.
\end{proposition}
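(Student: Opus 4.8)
The plan is to work order by order in the coupling constant $\lambda$ and to exploit the quasi-free character of $\omega^\beta$. First I would record that, since $A=\mathscr{R}_V(\tilde A)$ and $B=\mathscr{R}_V(\tilde B)$ with $\tilde A,\tilde B\in\mathscr{F}_{loc}(\mathcal{O})$ and $V=V_{\chi,h}$ of compact support (compact in time through $\chi$, compact in space through $h$), the causal and retarded support properties of the Bogoliubov map \eqref{eq:bogoliubov-map} imply that at each fixed order $\lambda^n$ the coefficients $A_n,B_n\in\mathscr{F}_{\mu c}$ are compactly supported polynomial functionals of the field. In particular, at each order only finitely many functional derivatives are nonzero, so every sum below is finite and no convergence issue arises at the level of formal power series.

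Next I would pass to the representation in which the product is $\star_{\omega^\beta}$; there the state acts by evaluation at the vanishing configuration, $\omega^\beta(F)=F(0)$ (consistent, on linear fields, with \eqref{eq:product} and \eqref{eq-KMS-free}). Using \eqref{eq:product}, the object to control becomes
\begin{equation*}
\omega^\beta(A\star\alpha_t(B))=\left.e^{\hbar\langle\omega_2^\beta,\frac{\delta}{\delta\phi}\otimes\frac{\delta}{\delta\phi'}\rangle}\,A(\phi)\,\alpha_t(B)(\phi')\right|_{\phi=\phi'=0},
\end{equation*}
in which every contraction produced by the exponential links one functional derivative of $A$ with one functional derivative of $\alpha_t(B)$, i.e. it is a \emph{mixed} contraction weighted by $\omega_2^\beta$ (these are precisely the connected contributions joining $A$ to $\alpha_t(B)$ in the language of \eqref{eq:connected-part}). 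Expanding the exponential, the term with no contractions is exactly $A(0)\,\alpha_t(B)(0)=\omega^\beta(A)\,\omega^\beta(B)$, where I used the $\alpha_t$-invariance of $\omega^\beta$. Every remaining term is a finite product of smeared mixed contractions of the form $\langle\omega_2^\beta,u\otimes\alpha_t v\rangle$ with $u,v$ fixed compactly supported distributions.

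The heart of the argument is then to show that each such smeared mixed contraction tends to $0$ as $t\to\infty$. Because $\alpha_t$ is a pure time translation, the time-support of $\alpha_t v$ runs off to infinity while that of $u$ stays fixed, so the separation of the two arguments of $\omega_2^\beta$ grows without bound in the time direction. Inserting the explicit KMS two-point function \eqref{eq-KMS-free} and carrying out the $t$-translation, one is left with oscillatory integrals of the schematic type $\int\dvol{^3\vettore{k}}\,g(\vettore{k})\,e^{\pm i\frequenza{\vettore{k}}t}$; since the theory is massive, $\frequenza{\vettore{k}}=\sqrt{\vettore{k}^2+m^2}\ge m>0$, there is no zero-frequency (stationary, non-oscillating) contribution, and a stationary-phase/Riemann--Lebesgue estimate forces each such integral to vanish in the limit. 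Hence every term of order $N\ge1$ in the contraction expansion carries at least one vanishing factor and drops out, leaving precisely $\omega^\beta(A)\,\omega^\beta(B)$ at each order in $\lambda$, which is the claim.

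I expect the genuine obstacle to be this last decay step made rigorous, rather than the combinatorial reduction. One must justify the oscillatory-integral estimate when $\omega_2^\beta$ is smeared against the \emph{distributional} functional derivatives of $A$ and $B$, which carry nontrivial wave front sets, controlling the growth of their Fourier transforms on the mass shell; this is where the compact spatial support of $h$ and of $\mathcal{O}$, together with the strict positivity of the mass, enter essentially. It is worth stressing that exactly these two hypotheses are what fail in the adiabatic limit treated in Section~4, so the proof should be organized to make transparent that the decay of $\langle\omega_2^\beta,u\otimes\alpha_t v\rangle$ is the single point on which the clustering property hinges.
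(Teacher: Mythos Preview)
Your combinatorial reduction matches the paper's: expand the $\star_{\omega^\beta}$-product, identify the zero-contraction term with $\omega^\beta(A)\omega^\beta(B)$, and show the rest vanishes. One inaccuracy, though: the term with $n\geq 1$ contractions is \emph{not} a product of $n$ independent pairings $\langle\omega_2^\beta,u\otimes\alpha_t v\rangle$. It is the single pairing
\[
\frac{1}{n!}\,\big\langle A^{(n)}(0)\otimes(\alpha_t B)^{(n)}(0),\,(\omega_2^\beta)^{\otimes n}\big\rangle,
\]
and $A^{(n)}(0)$, $B^{(n)}(0)$ are in general non-factorizing compactly supported distributions on $M^n$. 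So the argument ``at least one vanishing factor'' does not apply; you must show that the whole pairing decays.

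For this decay step the paper takes a more direct route than your Riemann--Lebesgue sketch, and it is exactly the route that resolves the difficulty you yourself flag at the end. Since $V_{\chi,h}$ has compact support, $A$ and $B$ are supported in a fixed compact $\mathcal{C}$; for $t$ large enough no null geodesic meets both $\mathcal{C}$ and its time-translate $\mathcal{C}_t$, so $(\omega_2^\beta)^{\otimes n}$ restricted to $\mathcal{C}^n\times\mathcal{C}_t^n$ is \emph{smooth}. Continuity of the compactly supported distribution $A^{(n)}(0)\otimes B^{(n)}(0)$ then gives
\[
\Big|\big\langle A^{(n)}(0)\otimes(\alpha_t B)^{(n)}(0),(\omega_2^\beta)^{\otimes n}\big\rangle\Big|
\;\le\;C_n\sum_{|\alpha|<K_n}\big\|D^{(\alpha)}\big(\Lambda_{\mathcal{C}^{2n}}(\omega_{2,t}^\beta)^{\otimes n}\big)\big\|_\infty,
\]
and the right-hand side tends to zero by the pointwise estimate $|D^{(\alpha)}\omega_2^\beta(x;t_y+t,\vettore{y})|\le C_\alpha t^{-3/2}$ for timelike separations (Proposition~\ref{pr-decay-omega-beta}, proved via the Bessel-function asymptotics of the massive two-point function). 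This bypasses any stationary-phase analysis against distributional smearings: the wave-front-set issue disappears once one observes that the kernel is smooth on the relevant support, and the decay then comes from an explicit $t^{-3/2}$ bound rather than from qualitative Riemann--Lebesgue. Your outline becomes a complete proof once you replace the factorization picture by this single-pairing estimate.
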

\begin{proof}
$A$ and $B$ can be constructed as sums of star products of the form $\mathscr{R}_V(F_1)\star \dots \star \mathscr{R}_V(F_n)$ where $F_n$ are local fields. 
Since $V_{\chi,h}$ is of compact support, the product $\mathscr{R}_V(F_1)\star \dots \star \mathscr{R}_V(F_n)$ is of compact support too. 
%%%%
Actually,  $\supp{(\mathscr{R}_{V_{\chi,h}}(F_i))} \subset J^+(\supp V_{\chi,h}) \cap J^-(\supp F_i)$ which is a compact set if  $V_{\chi,h}$ is of spatial compact support  and if $F_i\in \mathscr{F}(\mathcal{O})$. 
Hence, the supports of both $A$ and $B$ are compact. Suppose for simplicity that both of them are contained in a compact set $\mathcal{C}$.
From \eqref{eq:product} and thanks to the invariance of $\omega^\beta$ under $\alpha$, we have
\begin{equation}\label{eq:int}
\omega^\beta(A\star \alpha_{t}(B)) -  \omega^\beta(A)\omega^\beta(B) = \sum_{n\geq 1}\frac{1}{n!} \langle A^{(n)}   ,{\omega_2^\beta}^n {(\alpha_t(B))}^{(n)} \rangle_n\;.
\end{equation}
Notice that the distributional support of $A^{(n)}$ is contained in $\mathcal{C}^n$ while that of ${(\alpha_t(B))}^{(n)}$  is contained in 
$\mathcal{C}_t^n$ where $\mathcal{C}_t$ is $\mathcal{C}$ translated in time $t$, namely
\[
\mathcal{C}_t:=\{(\tau,\vettore{x})\in M |   (\tau-t,\vettore{x})\in \mathcal{C}\}.
\]
Whenever $t$ is sufficiently large  there are no null geodesics which intersect at the same time $\mathcal{C}$ and $\mathcal{C}_t$. For this reason, and because  $(x,y)$ are contained in the singular support of $\omega_2^\beta$ only if $x,y$ are joined by a null geodesic, the integral kernel of ${\omega_2^\beta}^n$ restricted to $(\mathcal{C}\times \mathcal{C}_t)^n$ is smooth if $t$ is sufficiently large. 
Hence, the limit of \eqref{eq:int} for large times is governed by the decaying properties of the two-point function $\omega_2^\beta(x,y)$ when $x-y$ is a large timelike vector. 
More precisely, we know from Proposition \ref{pr-decay-omega-beta} stated in the appendix that 
\begin{equation}\label{eq:decay2pt}
\left|D^{(\alpha)}\omega_2^\beta(x; t_y+t, \vettore{y})\right|\leq \frac{C_\alpha}{t^{3/2}} 
\end{equation}
where $\alpha\in\mathbb{N}^8$ is a multi index and $D^{(\alpha)}$ indicates partial derivatives of order  $|\alpha|=\sum_{i=1}^n\alpha_i$ along the directions determined by $\alpha$ and $C_\alpha$ are some positive constants. Furthermore, 
\[
\left\langle A^{(n)}   ,{\omega_2^\beta}^n {(\alpha_t(B))}^{(n)} \right\rangle_n= A^{(n)}\otimes B^{(n)}(\Lambda_{\mathcal{C}^{2n}} {\omega^\beta_t}^n)
\]
where, $\omega^\beta_t$ equals $\omega_2^\beta$ with the second entry translated by $-t$ along the selected Minkowski time.
$\Lambda_{\mathcal{C}^{2n}}$ is a compactly supported function in $M^{2n}$ which is equal to $1$ on $\mathcal{C}^{2n}$, notice that the precise form of the function does not enter in the final result because of the support of $A^{(n)}\otimes B^{(n)}$. Furthermore, as discussed above, for large values of $t$, 
$\Lambda_{\mathcal{C}^{2n}} {\omega^\beta_t}^n$ is smooth. 
The distributions $A^{(n)} \otimes B^{(n)} $ are of compact support, hence, by continuity we have that
\[
\left| \left\langle A^{(n)}   ,{\omega_2^\beta}^n {(\alpha_t(B))}^{(n)} \right\rangle_n\right| \leq 
C_n \sum_{|\alpha|<K_n}  \left\| D^{(\alpha)}\Lambda_{\mathcal{C}^{2n}} {\omega^\beta_t}^n \right\|_\infty,
\]
where $C_n$ and $K_n$ are fixed constants.
Thanks to \eqref{eq:decay2pt}, the right hand side of the previous inequality vanishes in the limit $t\to\infty$.
Hence, we have the result because, the sum over $n$ in \eqref{eq:int} converges to $0$ in the sense of formal power series. 

\end{proof}

The clustering condition established in Proposition \ref{pr:clustering} permits to show that the interacting KMS $\omega^{\beta,V}$ evolved with the free evolution converges pointwise in $\mathscr{F}_I$ to the free KMS state. Actually,
\begin{equation}\label{eq:reverse-stability}
\lim_{T\to\infty} \omega^{\beta,V}(\alpha_T(A))  = \lim_{T\to\infty} \frac{\omega^{\beta}(\alpha_T(A)\star U_V(i\beta))}{\omega^\beta(U_V(i\beta))}  = \omega^{\beta}(A)
\end{equation}
where the limit is taken in the sense of perturbation theories, namely, after expanding both sides of the equality in the perturbation parameter $\lambda$ we have convergence order by order. Furthermore, in the first equality we used the definition \eqref{eq:int-impl} and in the second one the result of  Proposition \ref{pr:clustering} extended to $U_V$.
As discussed in the introduction, the previous condition \eqref{eq:reverse-stability}  is very close to one of the stability conditions analyzed in \cite{HHW, Robinson, BrKiRo}, see \eqref{eq:stability-bratteli}. However, for our purposes, we would like to prove that the free KMS state evolved with the interacting dynamics tends to the interacting KMS state constructed in \cite{Lindner, FredenhagenLindner}, i.e. the limit stated in \eqref{eq:retrun-to-equilibrium}. 
In any case, the clustering condition obtained in Proposition \ref{pr:clustering} permits to have stability up to first order in $V$. 
Actually we have the following Theorem, whose proof can be given in close analogy to the proof of Theorem 2 in \cite{BrKiRo}. 

\begin{theorem}\label{th:first-order-stability-compact}{(First order stability)}
Let $\omega^\beta$ be the extremal KMS state with respect to the evolution $\alpha_t$ at inverse temperature $\beta$ of the free theory. Then, 
under perturbations described by $V_{\chi,h}$ in \eqref{eq:potential-cutoffs},
return to equilibrium \eqref{eq:retrun-to-equilibrium} holds at first order, i.e. 
\begin{equation}\label{n=1terms-a}
\lim_{T\to\infty} i\int_0^T\dvol{t}\;\kms{[\alpha_{t}(K),\alpha_T(A)]} =
-\int_0^\beta\dvol{u}\;\omega^{\beta,c}\left(
A\otimes\alpha_{iu}(K)
\right)
\end{equation}
where $A$ is an element of $\mathscr{F}_I(\mathcal{O})$ and $K$ is as in \eqref{eq:time-average-hamiltonian} with $\chi=1$ on $J^+(\mathcal{O})$.
\end{theorem}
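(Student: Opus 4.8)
The plan is to read the left-hand side of \eqref{n=1terms-a} as the first-order (in the coupling) contribution to $\omega^\beta(\alpha^V_T(A))$ coming from the $n=1$ term of the Dyson expansion \eqref{eq:pert-dyn}, and the right-hand side as the $n=1$ term of the interacting KMS state \eqref{eq:interacting-KMS}; the theorem then asserts that these two first-order quantities agree as $T\to\infty$, in the sense of formal power series. The argument follows the strategy of Theorem 2 in \cite{BrKiRo}, with the $C^*$-algebraic clustering there replaced by Proposition \ref{pr:clustering} here. First I would exploit the $\alpha$-invariance of $\omega^\beta$ to write $\omega^\beta([\alpha_t(K),\alpha_T(A)]) = \omega^\beta([\alpha_{t-T}(K),A])$ and, after the change of variables $s=t-T$, reduce the left-hand side to the one-sided integral
\[
I(T) := i\int_0^T\dvol{t}\;\omega^\beta([\alpha_t(K),\alpha_T(A)]) = i\int_{-T}^0\dvol{s}\;\omega^\beta([\alpha_s(K),A]).
\]
Since the disconnected contributions $\omega^\beta(K)\omega^\beta(A)$ cancel in the commutator, the integrand is expressed entirely through the connected two-point function $W(s) := \omega^{\beta,c}(A\otimes\alpha_s(K))$, which by \eqref{eq:int} is the sum over $n\ge 1$ of the pairings $\frac{1}{n!}\langle A^{(n)},(\omega_2^\beta)^n(\alpha_s(K))^{(n)}\rangle_n$.

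The KMS condition for $\omega^\beta$ furnishes the analytic continuation of $W$ to the strip $\{0\le\Im s\le\beta\}$ together with the boundary identity $\omega^{\beta,c}(\alpha_s(K)\otimes A)=W(s+i\beta)$, so that $\omega^\beta([\alpha_s(K),A]) = W(s+i\beta)-W(s)$ and $I(T)=i\int_{-T}^0[W(s+i\beta)-W(s)]\,\dvol{s}$. I would then apply Cauchy's theorem to the analytic function $W$ on the rectangle with vertices $-T,\,0,\,i\beta,\,-T+i\beta$. The two horizontal sides reproduce (up to the factor $i$) the integral defining $I(T)$, so the vanishing of the total contour integral yields
\[
I(T) = -\int_0^\beta\dvol{u}\;\omega^{\beta,c}(A\otimes\alpha_{iu}(K)) + \int_0^\beta\dvol{u}\;W(-T+iu),
\]
where the first term arises from the right vertical side $s=iu$ and equals exactly the right-hand side of \eqref{n=1terms-a}, while the second arises from the left vertical side $s=-T+iu$. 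The proof thus reduces to showing that this last contribution vanishes as $T\to\infty$.

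The main obstacle is precisely this last step. Proposition \ref{pr:clustering} gives the decay of the connected two-point function only for real time separations, whereas here I need $W(-T+iu)\to 0$ uniformly for $u\in[0,\beta]$, i.e.\ clustering along the interior of the KMS strip. I would establish this by complexifying the estimate used in Proposition \ref{pr:clustering}: writing $W(-T+iu)$ as the pairing of the compactly supported distributions $A^{(n)}$ and $(\alpha_{-T+iu}(K))^{(n)}$ with powers of $\omega_2^\beta$ whose second time argument is translated by $-T+iu$, and observing that the imaginary shift $iu$ with $u\in[0,\beta]$ keeps the arguments within the analyticity strip of $\omega_2^\beta$, hence off the null cone, so that the kernel stays smooth, while the real shift $-T$ drives the separation into a large timelike direction. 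A version of the bound \eqref{eq:decay2pt} from Proposition \ref{pr-decay-omega-beta}, valid uniformly for complex time translations with imaginary part in the compact interval $[0,\beta]$, then forces $W(-T+iu)\to 0$ uniformly in $u$, and dominated convergence (applied order by order in the coupling, the sum over $n$ being finite at each order) disposes of the $u$-integral.

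Alternatively, one may reach the same conclusion by a Phragm\'en--Lindel\"of / three-line argument, using that $W$ is bounded and analytic in the strip and that it decays as $\Re s\to-\infty$ on \emph{both} boundary lines $\Im s=0$ and $\Im s=\beta$ — the line $\Im s=0$ being covered directly by Proposition \ref{pr:clustering} and the line $\Im s=\beta$ by the same proposition applied to the swapped ordering $\omega^{\beta,c}(\alpha_s(K)\otimes A)=W(s+i\beta)$. Either route yields $\lim_{T\to\infty}\int_0^\beta W(-T+iu)\,\dvol{u}=0$, whence $\lim_{T\to\infty}I(T)$ equals the stated right-hand side, establishing return to equilibrium \eqref{eq:retrun-to-equilibrium} at first order.
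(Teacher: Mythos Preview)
Your approach is essentially identical to the paper's: both use $\alpha$-invariance to rewrite the commutator integrand as $\omega^\beta(A\star\alpha_{-t+i\beta}(K))-\omega^\beta(A\star\alpha_{-t}(K))$, then apply Cauchy's theorem (the paper calls it the divergence theorem) on the rectangle to convert the real-time integral into the difference of two imaginary-time integrals, and finally invoke clustering to dispose of the side at $\Re s=-T$. The only stylistic difference is that you work with the connected function $W$ from the outset, while the paper works with the full correlator and subtracts $\omega^\beta(A)\omega^\beta(K)$ at the end; this is immaterial. You are in fact more explicit than the paper about the one genuinely delicate point: the paper simply cites Proposition~\ref{pr:clustering} to conclude $\lim_{T\to\infty}\int_0^\beta \omega^\beta(A\star\alpha_{-T+iu}(K))\,\dvol{u}=\beta\,\omega^\beta(A)\omega^\beta(K)$ without commenting on the imaginary shift $iu$, whereas you correctly identify this as requiring either a uniform-in-$u$ version of the decay estimate \eqref{eq:decay2pt} or a Phragm\'en--Lindel\"of argument.
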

\noindent
Before discussing the proof, we notice that, the right hand side of \eqref{n=1terms-a} is the first order contributions in $K$ of $\eqref{eq:interacting-KMS}$ while the left hand side is that of $\omega^\beta$ composed with $\alpha^V$ in \eqref{eq:pert-dyn}. Since $K$ is itself a formal power series in the coupling constant $\lambda$ where the order zero vanishes, the proposition implies the stability up to first order in the sense of perturbation theory.
\begin{proof} {\it of Theorem \ref{th:first-order-stability-compact}}
The proof can be performed in close analogy to the proof of Theorem 2 in \cite{BrKiRo} once the clustering condition stated in Proposition \ref{pr:clustering} is established. 
Let us start noticing that 
\begin{eqnarray}
i\int_0^T\dvol{t}\;\kms{[\alpha_{t}(K),\alpha_T(A)]} & =& i\int_0^T\dvol{t}\;\kms{[\alpha_{-t}(K),A]}
\notag
\\
&=& i\int_0^T\dvol{t}\left(\kms{A\star\translation{-t+i\beta}(K)}-\kms{A\star\translation{-t}(K)}\right)
\notag
\\ 
\label{eq:change-var}
&=& \int_0^\beta\dvol{u}\left(\kms{A\star\translation{-T+iu}(K)}-\kms{A\star\translation{iu}(K)}\right),
\end{eqnarray}
where the last equality holds because of the divergence theorem. Actually, thanks to the KMS property satisfied by $\omega^\beta$, the function 
$F(z):=  \kms{A\star\translation{z}(K)}$
is analytic in the strip $\Im(z)\in[0, \beta]$. Hence $\partial_{\overline{z}}F(z)=0$, thus the integral of $F(z)$ over a closed oriented curve in the domain of analyticity vanishes. 
From the clustering condition stated in Propostion \ref{pr:clustering} we have that 
\begin{gather}\label{ergodic property}
\lim_{T\to\infty}\int_0^\beta\dvol{u}\; \kms{A\star\translation{-T+iu}(K)}= \int_0^\beta \dvol{u}\;
\kms{A}\kms{K}
\end{gather}
hence, using it in \eqref{eq:change-var} and recalling the definition \eqref{eq:connected-part}, we conclude that the limit \eqref{n=1terms-a} holds.
\end{proof}
The clustering condition established in Proposition \ref{pr:clustering} does not suffice to obtain the sought return to equilibrium to all orders in $K$.
Actually, at higher orders, due to the presence of the $\star-$product of various time translated generators, the clustering condition cannot be used to factorize their expectation values.
 We have thus to introduce the following proposition.
\begin{proposition}\label{pr:clustering2} (Clustering condition for $\alpha_t^V$)
The following clustering condition, 
\[
\lim_{t\to\pm\infty} \left[ \omega^{\beta}(A\star \alpha_{t}^V(B)) - \omega^\beta(A) \omega^\beta(\alpha_{t}^V(B))  \right] =0 ,
\]
for $A$ and $B$ in $\mathscr{F}_I(\mathcal{O})$, holds in the sense of formal power series in the coupling constant whenever the perturbation Lagrangian $V_{\chi,h}$ has compact spatial support. The same result holds also when $A=U(i\beta)\star A'$ if $A'\in \mathscr{F}_I(\mathcal{O})$.
\end{proposition}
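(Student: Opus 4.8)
The plan is to deduce clustering for the interacting evolution $\alpha_t^V$ from the clustering for the \emph{free} evolution already proved in Proposition~\ref{pr:clustering}, by expanding $\alpha_t^V(B)$ perturbatively and organising the free expectation values through the connected functions \eqref{eq:connected-part}. The starting remark is that, applying \eqref{eq:connected-part} to a product of two factors, the bracket to be estimated is exactly the connected two-point function
\[
\omega^\beta(A\star\alpha_t^V(B)) - \omega^\beta(A)\,\omega^\beta(\alpha_t^V(B)) = \omega^{\beta,c}\big(A\otimes\alpha_t^V(B)\big),
\]
so it is enough to show that this quantity vanishes as $t\to\pm\infty$ order by order in the coupling. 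Substituting the expansion \eqref{eq:pert-dyn} of $\alpha_t^V(B)$ and using that $\omega^{\beta,c}$ is linear in each entry, the contribution of the $n$-th term is the integral over $tS_n$ of $\omega^{\beta,c}\big(A\otimes[\alpha_{t_1}(K),[\ldots,[\alpha_{t_n}(K),\alpha_t(B)]\ldots]]\big)$, where $K$ is compactly supported because $V_{\chi,h}$ is.

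First I would translate the integration variables by setting $s_j:=t_j-t$ and use the $\alpha$-invariance of $\omega^\beta$ to write the nested commutator as $\alpha_t(D_n(s))$, where $D_n(s):=[\alpha_{s_1}(K),[\ldots,[\alpha_{s_n}(K),B]\ldots]]$ is, for each fixed $s$, a compactly supported element of the interacting algebra that does not depend on $t$. For fixed $s$ the integrand is then $\omega^{\beta,c}(A\otimes\alpha_t(D_n(s)))$, which tends to $0$ as $t\to\pm\infty$ by the very argument of Proposition~\ref{pr:clustering}: $A$ stays in the fixed region $\mathcal{O}$ while $\alpha_t(D_n(s))$ is translated to infinity in time, and the connected two-point function is suppressed by the timelike decay \eqref{eq:decay2pt} of $\omega_2^\beta$.

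The genuine work is to upgrade this pointwise convergence to convergence of the integral over the simplex $tS_n$, whose volume grows with $t$. Here I would exploit the explicit structure of the connected function: $\omega^{\beta,c}(A\otimes\alpha_t(D_n(s)))$ is a sum of connected contraction diagrams in which $A$, sitting near time $0$, is joined to the factors of $\alpha_t(D_n(s))$ sitting near the times $t+s_j$ and $t$, while the internal contractions involve only the differences $s_i-s_j$ and are therefore independent of $t$. These internal contractions are built out of $\omega_2^\beta$ and, because of the nested-commutator structure, of the commutator function $\Delta$, both of which decay when the $s_j$ are spread apart; the iterated commutator thus links every generator, through these decaying kernels, back towards the fixed region and produces a $t$-independent bound $g(s)$ that is integrable over the full simplex $\{s_1\le\cdots\le s_n\le 0\}$. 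Dominated convergence then kills the $n$-th contribution. The main obstacle — and the reason the statement does not follow from Proposition~\ref{pr:clustering} directly — is precisely this growth of the support: $\alpha_t^V(B)=\mathscr{R}_V(\alpha_t F)$ occupies the long causal diamond $J^+(\supp V)\cap J^-(\supp\alpha_t F)$ stretching from the perturbation region up to time $t$, so its support never separates from that of $A$. The dangerous contributions come from generators localised near the initial surface ($t_j\approx 0$, i.e. $s_j\approx -t$); it is the decay of $\Delta$ along the causal chain forced by the commutators, together with \eqref{eq:decay2pt}, that must control them and secure the integrable bound.

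Finally, for the case $A=U(i\beta)\star A'$ I would expand $U(i\beta)=U_V(i\beta)$ through \eqref{eq:intertwiner}--\eqref{eq:interacting-KMS} as a sum of $\star$-products of the imaginary-time translates $\alpha_{iu_k}(K)$ of the compactly supported generator $K$. In the connected decomposition these factors, together with $A'$, all sit near the initial surface and therefore play the role of $A$ above; since the $iu_k$ with $u_k\in[0,\beta]$ are bounded analytic shifts under which the decay \eqref{eq:decay2pt} persists, the same two estimates apply verbatim and yield the conclusion.
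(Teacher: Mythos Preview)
Your approach is the paper's: expand $\alpha_t^V(B)$ via \eqref{eq:pert-dyn}, write $\omega^{\beta,c}\big(A\otimes[\alpha_{t_1}(K),[\dots,[\alpha_{t_n}(K),\alpha_t(B)]\dots]]\big)$ as a sum over connected oriented graphs on $n{+}2$ vertices, and bound each contribution line by line via Proposition~\ref{pr-decay-functional-derivatives} to obtain the chain estimate $C\prod_{k=0}^{n}(|t_{k+1}-t_k|+d)^{-3/2}$, whose integral over $tS_n$ vanishes as $t\to\infty$; your dominated-convergence packaging after the shift $s_j=t_j-t$ is just a reformulation of that last step (bounding the $A$--$K_1$ factor by $d^{-3/2}$ yields precisely the $t$-independent integrable $g(s)$). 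One small correction: the nested commutator forces at least one \emph{oriented $\omega_2^\beta$} line between successive entries (the paper's $e^{D_{12}}-e^{D_{21}}$), not specifically a $\Delta$ propagator, so the decay that controls the dangerous region $s_j\approx -t$ is exactly \eqref{eq:decay2pt} for $\omega_2^\beta$ itself, and the causal support of $\Delta$ plays no role here.
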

\begin{proof}
Definition \eqref{eq:connected-part} implies that the statement of the proposition holds if, 
at every order of perturbation, the connected function
\[
\omega^{\beta,c}(A \otimes \alpha_{t}^V(B)) =  \omega^{\beta}(A\star \alpha_{t}^V(B)) - \omega^\beta(A) \omega^\beta(\alpha_{t}^V(B))  
\]
vanishes for large or negative values of $t$. Let us thus expand $\alpha_{t}^V$ as in \eqref{eq:pert-dyn}. Hence,
\[
\omega^{\beta,c}(A \otimes \alpha_{t}^V(B)) =  
\sum_{n\geq 0}i^n\int_{tS_n}\dvol{t_1}\dots\dvol{t_n}\omega^{\beta,c} (A\otimes 
[\alpha_{t_1}(K),[\ldots [\translation{t_n}(K),\alpha_t(B)] \ldots]]
).
\]
The element $n=0$ in the sum vanishes in the limit of large $t$ thanks to the clustering condition given in Proposition \ref{pr:clustering}. 
We show now that the $n-$th element of the previous sum tends to zero for $t\to\infty$ in the sense of formal power series in the coupling constant.

To this end, we notice that 
$[\alpha_{t_1}(K),[\ldots ,[\translation{t_n}(K),B]\ldots ]]$
is a sum of connected components.  In order to prove it notice that
\[
[A,B]=A\star B-B\star A = m \left(e^{D_{12}} -  e^{D_{21}} \right) A\otimes B.
\]
In the previous formula $D_{ij}$ is the functional differential operator
\begin{equation}\label{eq:differential}
D_{ij}=\left\langle \omega^\beta_2,\frac{\delta^2}{\delta \varphi_i\delta \varphi_j} \right\rangle
\end{equation}
where $\frac{\delta}{\delta \varphi_i}$ is the functional derivative with respect to the $i-$th  element in the tensor product of $n$ elements and $\omega^\beta_2$ is the thermal two-point function \eqref{eq-KMS-free}.
Furthermore, $m(A_1\otimes \dots \otimes A_n) :=A_1\dots A_n$ maps the tensor product into the pointwise product.
Since 
\[
\omega^{\beta,c} (A\otimes B) = \left. m \left(e^{D_{12}} -  1  \right)A\otimes B \right|_{(\phi_1,\phi_2)=0},
\]
we conclude that $\omega^{\beta,c} (A\otimes 
[\alpha_{t_1}(K),[\ldots [\translation{t_n}(K),B] \ldots]]
)$ is a weighted sum over the set $\mathcal{G}_{n+2}^{o,c}$ of connected oriented graphs with $n+2$ vertices. Every oriented line joining two vertices indicates the presence of a two-point function. Furthermore, 
 a single graph $G\in \mathcal{G}_{n+2}^{o,c}$ cannot contain lines with opposite orientations joining the same two vertices.
 The orientation is necessary because the two-point function is not symmetric and because of the presence of subsequent commutators. 
Indicating by $c_{G}$ the weight of the single graph $G$ it holds that 
\[
\omega^{\beta,c} (A\otimes 
[\alpha_{t_1}(K),[\ldots [\translation{t_n}(K),\alpha_t(B)] \ldots]] = \sum_{G\in \mathcal{G}_{n+2}^{o,c}} c_G \int_{tS_n}\dvol{t_1}\dots \dvol{t_n}\; F_G(t_1,\dots , t_n), 
\]
where
\[
\qquad F_G(t_1,\dots , t_n):=\left. m \prod_{l\in E(G)} D_{s(l)r(l)} \left(A\otimes \alpha_{t_1}(K)\otimes \dots \otimes \alpha_{t_n}(K) \otimes \alpha_t(B)\right) \right|_{(\phi_1,\dots \phi_{n+2})=0}
\]
where $E(G)$ is the set of the lines of $G$, $s(l)$ and $r(l)$ in $\{0,\dots, n+1\}$ indicates the source and the range of the line $l$.

Since $V=V_{\chi,h}$ is of compact spatial support, $K$, $A$ and $B$ are in $\mathcal{F}_{\mu c}$. We apply Proposition \ref{pr-decay-functional-derivatives} some times to get
\begin{gather*}
|F_G(t_1,\dots , t_n)|\leq C_1\prod_{l\in E(G)}   \frac{1}{(|t_{s(l)}-t_{r(l)}|+d)^{3/2}} \\
\leq C_2{(|t_1+d||t_2-t_1+d||t_3-t_2+d|\dots |t_n-t_{n-1}+d||t-t_n+d|)}^{-3/2}
\end{gather*}
for some $d$ and some $C_1$ and $C_2$ where in the last inequality we used the fact that the graph $G$ is connected and that the times $(t_1,\dots, t_n)\in tS_n$. 
The integral over the simplex of the right hand side of the previous chain of inequalities can now be performed. Furthermore, 
it vanishes in the limit $t\to\infty$. Thus concluding the proof.
\end{proof}

Actually, the previous proposition can be used to show the validity of a (sort of) Gell-Mann Low factorization formula for $\omega^{\beta, V}$ with respect to $\omega^{\beta}$. We stress that the failure of the return to equilibrium property in the adiabatic limit $h\to 1$ -- cf. Proposition \ref{pr:infrared-divergences} in the next section -- implies that such a formula is not preserved in the non-spacelike compact case.
Owing to the clustering condition established in Proposition \ref{pr:clustering2} we can show that the free KMS state is stable. Actually we have the following theorem. 

\begin{theorem}\label{th:stability-compact}{(Stability)}
Let $\omega^\beta$ be the extremal KMS state  with respect to the evolution $\alpha_t$ at inverse temperature $\beta$ of the free theory. Then the state is stable under perturbation described by a
$V_{\chi,h}$ which is a spatially compact interacting Lagrangian. Namely
\begin{equation}\label{eq:limit-state}
\lim_{T\to\infty}\omega^\beta({\translation{T}^V(A)}) = \omega^{\beta,V}(A)
\end{equation}
where $A$ is an element of $\mathscr{F}_I(\mathcal{O})$ where $\chi=1$ on $J^+(\mathcal{O})$.
\end{theorem}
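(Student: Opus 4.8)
The plan is to deduce the statement from the clustering property for the interacting dynamics (Proposition \ref{pr:clustering2}) together with the fact that $\omega^{\beta,V}$ is a KMS state — hence invariant — for the evolution $\alpha^V_t$. This is the natural all-orders counterpart of the first-order argument in Theorem \ref{th:first-order-stability-compact} and of the proof of Theorem 2 in \cite{BrKiRo}; the essential new input is precisely the clustering under $\alpha^V$, which is the reason Proposition \ref{pr:clustering2} was established beforehand.

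First I would record the two structural facts I intend to combine. On the one hand, since $\omega^{\beta,V}$ is the $\beta$-KMS state for $\alpha^V_t$ constructed in \cite{FredenhagenLindner}, it is $\alpha^V$-invariant in the sense of formal power series, so that $\omega^{\beta,V}(\alpha^V_T(A)) = \omega^{\beta,V}(A)$ for every $T$. On the other hand, the defining formula \eqref{eq:int-impl} applied to the observable $\alpha^V_T(A)$ yields
\[
\omega^{\beta,V}(A) = \omega^{\beta,V}(\alpha^V_T(A)) = \frac{\omega^\beta\!\left(\alpha^V_T(A)\star U_V(i\beta)\right)}{\omega^\beta\!\left(U_V(i\beta)\right)},
\]
an identity whose left-hand side does not depend on $T$. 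Here the analytic continuation to $t=i\beta$ is legitimate for the same reason as in \eqref{eq:int-impl}, the relevant function of $t$ being analytic in the strip, and everything being understood order by order in $\lambda$, where only finitely many terms contribute.

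Next I would subtract $\omega^\beta(\alpha^V_T(A))$ and rewrite the difference through the connected part $\omega^{\beta,c}$ of $\omega^\beta$:
\[
\omega^\beta(\alpha^V_T(A)) - \omega^{\beta,V}(A) = -\,\frac{\omega^{\beta,c}\!\left(\alpha^V_T(A)\otimes U_V(i\beta)\right)}{\omega^\beta\!\left(U_V(i\beta)\right)},
\]
where I used $\omega^{\beta,c}(X\otimes Y) = \omega^\beta(X\star Y) - \omega^\beta(X)\,\omega^\beta(Y)$ and the fact that $\omega^\beta(U_V(i\beta)) = 1 + O(\lambda)$ is invertible as a formal power series. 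The whole statement \eqref{eq:limit-state} thus reduces to showing that the connected correlation between the interacting-time-translated observable $\alpha^V_T(A)$ and the fixed, imaginary-time cocycle $U_V(i\beta)$ tends to zero as $T\to\infty$. This is exactly the content of Proposition \ref{pr:clustering2}: the insertion $U_V(i\beta)$ is covered by the extension clause (the case $A=U(i\beta)\star A'$ with $A'=1$), and its proof, which rests only on the decay of the thermal two-point function $\omega_2^\beta$ between the two distant clusters, is insensitive to which of the two factors stands first in the $\star$-product. Hence the right-hand side vanishes order by order in $\lambda$, proving the claim.

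The main obstacle is concentrated in this clustering step, namely in controlling $\omega^{\beta,c}(\alpha^V_T(A)\otimes U_V(i\beta))$ with the imaginary-time insertion $U_V(i\beta)$; this is why Proposition \ref{pr:clustering2} and its extension to arguments of the form $U(i\beta)\star A'$ had to be proved first, and it is the place where the compact spatial support of $V_{\chi,h}$ enters decisively (the decay estimate \eqref{eq:decay2pt} underlying it degenerates in the adiabatic limit $h\to1$, consistently with the breakdown of return to equilibrium to be discussed later). Beyond that, the remaining work is bookkeeping: justifying the $\alpha^V$-invariance of $\omega^{\beta,V}$ and the analytic continuation to $i\beta$ at the level of formal power series, and verifying that the $T\to\infty$ limit may be taken termwise, which holds because each fixed order in $\lambda$ receives contributions from only finitely many terms of the Dyson-type series defining $\alpha^V_T$ and $U_V$.
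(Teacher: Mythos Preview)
Your argument is correct and reaches the same decisive step---Proposition \ref{pr:clustering2}---by a route that differs from the paper's.

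The paper does not invoke the $\alpha^V$-invariance of $\omega^{\beta,V}$. Instead it rewrites $\omega^\beta(\alpha^V_T(A))$ by a chain of identities: $\alpha$-invariance of $\omega^\beta$, the intertwining relation \eqref{eq:intertwine-alphas}, the KMS condition for $\omega^\beta$, and the cocycle identity \eqref{eq:cocycle} combine to give
\[
\omega^\beta(\alpha^V_T(A)) \;=\; \omega^\beta\!\bigl(\alpha^V_{-T}(U_V(i\beta)^{-1})\star U_V(i\beta)\star\alpha_{i\beta}A\bigr),
\]
after which clustering factors out $\omega^\beta(U_V(i\beta)\star\alpha_{i\beta}A)$, and a separate normalization argument (setting $A=1$) identifies the remaining limit $\lim_{T\to\infty}\omega^\beta(\alpha^V_{-T}(U_V(i\beta)^{-1}))$ as $\omega^\beta(U_V(i\beta))^{-1}$. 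Your approach short-circuits the cocycle algebra by exploiting a property of the \emph{target} state: since $\omega^{\beta,V}$ is $\alpha^V$-invariant, applying \eqref{eq:int-impl} to $\alpha^V_T(A)$ yields an identity valid for every $T$, and the difference $\omega^\beta(\alpha^V_T(A))-\omega^{\beta,V}(A)$ is immediately a single connected function divided by the (already correct) normalization. The trade-off is that you take the $\alpha^V$-invariance of $\omega^{\beta,V}$ as an input from \cite{FredenhagenLindner}, whereas the paper's version works purely from the free KMS condition and the cocycle structure. Both proofs end up invoking the extension clause of Proposition \ref{pr:clustering2} with the $\alpha^V$-translated factor appearing first rather than second; as you note, the graph-expansion proof of that proposition is manifestly insensitive to the order of the two clusters, so this is harmless in either case.
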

\begin{proof}
For simplicity, in this proof, we shall write $U(t):= U_V(t)$ given in \eqref{eq:intertwiner} and we shall not write explicitly the $\star$-product. 
Let us start observing that, by time translation invariance of $\omega^\beta$ and by \eqref{eq:intertwine-alphas}, we have
\begin{gather*}
\omega^{\beta}(\alpha_T^V(A))=
\omega^{\beta}(\alpha_{-T}\alpha_T^V(A))=
\omega^{\beta}(U(-T)^{-1} A U(-T))=
\omega^{\beta}( U(-T)\alpha_{i\beta}U(-T)^{-1} \alpha_{i\beta}A)
\end{gather*}
where in the last equality we have used the KMS condition. We have furthermore used the cocycle condition \eqref{eq:cocycle} to obtain $\alpha_{-T}(U(T))^{-1}=U(-T)$.
The cocycle condition \eqref{eq:cocycle} implies also that
\begin{gather*}
\omega^{\beta}(\alpha_T^V(A))=
\omega^{\beta}(U(-T)\alpha_{-T}(U(i\beta)^{-1})U(-T)^{-1} U(i\beta) \alpha_{i\beta}A) =
\omega^{\beta}(\alpha^V_{-T}(U(i\beta)^{-1}) U(i\beta) \alpha_{i\beta}A).
\end{gather*}
Proposition \ref{pr:clustering2} gives that the limit $T\to\infty$ of $\omega^{\beta}(\alpha_T^V(A))$ is such that
\[
\lim_{T\to\infty}\omega^{\beta}(\alpha_T^V(A)) =    \omega^\beta( U(i\beta) \alpha_{i\beta}A) 
\lim_{T\to\infty}( \omega^{\beta}(\alpha^V_{-T}(U(i\beta)^{-1})) ).
\]
Notice that $\alpha_T^V(1)=1$, the state $\omega^{\beta}$ is normalized, and $\omega^\beta(U(i\beta))$ is finite to all orders in perturbation theory, hence the limit $T\to\infty$ of $\omega^{\beta}(\alpha^V_{-T}(U(i\beta)^{-1}))$ converges in the sense of perturbation theory to $\omega^{\beta}(U(i\beta))^{-1}$, which is the normalization factor of $\omega^\beta( U(i\beta) \alpha_{i\beta}A)$. Hence, we have the result.
\end{proof}

\section{Instabilities in the adiabatic limit}\label{sec:inst}

In the previous section, we have established that KMS states for field theories are stable under spatially compact local perturbations.
We shall now discuss the case where the perturbation is constant in space, namely when the adiabatic limit is considered.
We shall see that the arguments used to prove Theorem \ref{th:stability-compact} or Theorem 
\ref{th:first-order-stability-compact} do not hold after having taken the adiabatic limit, hence return to equilibrium \eqref{eq:retrun-to-equilibrium} does not hold in this case.
In order to enhance the chances of having at least a convergence limit we shall here consider an ergodic (temporal) mean of the free KMS state perturbed by $V$.  
The ergodic mean is usually introduced to tame the possible oscillation for large times.
Actually, we study the {\bf ergodic mean} of $\omega^\beta \circ \translation{\tau}^V$
\begin{gather}\label{eq:average-limit-1}
\omega^{V,+}_T(A):=\lim_{h\to1}\frac{1}{T}\int_{0}^T\omega^\beta({\translation{\tau}^V(A)})\dvol{\tau}
\end{gather}
and eventually we analyze the limit $T\to \infty$.
As in the previous section, we consider a massive theory on a Minkowski spacetime perturbed with an interaction Lagrangian $V_{\chi,h}$.
We shall see that the clustering condition fails when the adiabatic limit is considered and this failure cannot be repaired by the ergodic mean.

\begin{proposition}\label{pr:failure-clustering}
Suppose that $\left.\frac{\delta^2 V_{\chi,h}}{\delta \phi \delta \phi}\right|_{\phi=0}\neq 0$.  If the adiabatic limit ($h\to 1$) is considered, 
the clustering condition fails at first order in perturbation theory also when the ergodic mean is considered, i.e.
\[
\lim_{T\to\infty}  \lim_{h\to 1}\left(  \frac{1}{T}\int_0^{T} \dvol{t}\; \omega^{\beta}(A\star\alpha_t(K))-  \omega^{\beta}(A) \omega^\beta(K) \right)\neq 0
\]
for $A=\mathscr{R}_V(F_f)\star\mathscr{R}_V(F_g)$ where $F_f$ and $F_g$ are local linear fields smeared by $f,g\in \mathcal{D}(M)$ and $K$ is as in \eqref{eq:time-average-hamiltonian}.
\end{proposition}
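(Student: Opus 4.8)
The plan is to isolate the order-$\lambda$ contribution to the bracketed quantity and to exhibit inside it a \emph{resonant}, zero-frequency piece that survives both the adiabatic limit and the ergodic mean. First I would reduce the problem to a single connected two-point function. Since $K$ is of order $\lambda$ and $\mathscr{R}_V(F_f)\star\mathscr{R}_V(F_g)=F_f\star F_g+O(\lambda)$, the first-order part of $\frac1T\int_0^T\dvol{t}\,\omega^\beta(A\star\alpha_t(K))-\omega^\beta(A)\omega^\beta(K)$ equals $\frac1T\int_0^T\dvol{t}\,\omega^{\beta,c}\big((F_f\star F_g)\otimes\alpha_t(K)\big)$ with $K$ replaced by its lowest-order term $H(h\dot\chi_-)$. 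The field-independent summand $\omega_2^\beta(f,g)$ of $F_f\star F_g$ drops out of the connected function, so I may replace $F_f\star F_g$ by the pointwise product $F_fF_g$. Using $\omega^{\beta,c}(A\otimes B)=m(e^{D_{12}}-1)(A\otimes B)|_{\phi=0}$ together with $\frac{\delta(F_fF_g)}{\delta\phi}\big|_{\phi=0}=0$, only the double contraction $\tfrac12 D_{12}^2$ survives; it pairs the two legs $f,g$ against $\frac{\delta^2\alpha_t(K)}{\delta\phi\delta\phi}\big|_{\phi=0}$, which is nonzero precisely by the hypothesis $\frac{\delta^2 V_{\chi,h}}{\delta\phi\delta\phi}\big|_{\phi=0}\neq0$. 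This is the step where the assumption enters essentially.

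Writing the (local, translation invariant) quadratic part of $H(h\dot\chi_-)$ as $c\,h(\vettore{y})\dot\chi_-(\tau_y)\delta(y-y')$ for a constant $c\neq0$, the first-order quantity becomes
\[
I(t)=c\int\dvol{y}\;h(\vettore{y})\,\dot\chi_-(\tau_y-t)\,\tilde f(y)\,\tilde g(y),\qquad \tilde f(y):=\int f(x)\,\omega_2^\beta(x,y)\,\dvol{x},
\]
and analogously for $\tilde g$; the shift $\dot\chi_-(\tau_y)\mapsto\dot\chi_-(\tau_y-t)$ records the action of $\alpha_t$ and the spatial translation invariance of $\mathcal{L}_I$. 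I would then take $h\to1$ and Fourier transform in the spatial variables: the unrestricted integral $\int\dvol{^3\vettore{y}}\,\tilde f\,\tilde g$ forces $\vettore{k}'=-\vettore{k}$, hence $\frequenza{\vettore{k}'}=\frequenza{\vettore{k}}$. Decomposing each $\omega_2^\beta$ into its $b_\pm$ parts as in \eqref{eq-KMS-free}, the integrand carries the phase $e^{-i(\pm\frequenza{\vettore{k}}\pm'\frequenza{\vettore{k}})\tau_y}$, and the mixed terms $(+,-)$, $(-,+)$ are independent of $\tau_y$. Thus $\Phi(\tau):=\lim_{h\to1}\int\dvol{^3\vettore{y}}\,\tilde f(\tau,\vettore{y})\tilde g(\tau,\vettore{y})$ splits into a constant resonant part $\Phi_0$ plus a part oscillating at the frequencies $\pm2\frequenza{\vettore{k}}$.

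It then remains to take the ergodic mean. Using $\int\dvol{s}\,\dot\chi_-(s)=1$ (which follows from the definition of $\chi$), one finds $\frac1T\int_0^T I(t)\big|_{h\to1}\,\dvol{t}=\frac{c}{T}\int_0^T\dvol{t}\int\dvol{s}\,\dot\chi_-(s)\,\Phi(s+t)$. The resonant part contributes $c\,\Phi_0$ for every $T$, whereas the oscillating part, after the $s$-integration produces a bounded factor $\widehat{\dot\chi}_-(\mp2\frequenza{\vettore{k}})$, is annihilated by $\frac1T\int_0^T e^{\pm2i\frequenza{\vettore{k}}t}\dvol{t}\to0$, uniformly in $\vettore{k}$ since $\frequenza{\vettore{k}}\geq m>0$ (dominated convergence). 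Hence the double limit equals $c\,\Phi_0$, which is nonzero for generic $f,g$, establishing the claimed failure of clustering.

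The hard part will be the analytic control of the two interchanges of limits rather than the algebra. First, one must justify that the distributional $\delta(\vettore{k}+\vettore{k}')$ produced by $h\to1$ yields a finite $\Phi_0$; this relies on the exponential thermal suppression $b_+b_-\sim e^{-\beta\frequenza{\vettore{k}}}$ in the resonant channel together with the Schwartz decay of $\hat f,\hat g$. Second, one should check consistency with Proposition \ref{pr:clustering}: for compact $h$ the very same channel is smeared by $\widehat{h}(\vettore{k}+\vettore{k}')$ and therefore decays, so the present nonvanishing result is not in contradiction with the clustering established there, but is exactly the effect of removing that smearing in the adiabatic limit.
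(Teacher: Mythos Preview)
Your proposal is correct and follows essentially the same route as the paper's own proof: reduce to the lowest-order connected contribution $\int\omega_2^\beta(f,y)\,\omega_2^\beta(g,y)\,\dot\chi_-(\cdot)\,h(\vettore{y})\,\dvol{y}$, take $h\to1$ so that the spatial integral enforces $\vettore{k}'=-\vettore{k}$, split the mode sum into the resonant $b_+b_-$ cross terms (independent of $y^0$) and the oscillating $b_\pm b_\pm$ terms, and observe that the ergodic mean kills the latter while leaving the former as a nonvanishing $w(f,g)$. The paper specializes at the outset to the purely quadratic potential and writes the surviving kernel explicitly as $w(x_1,x_2)=\frac{1}{(2\pi)^3}\int\dvol{^3\vettore{k}}\,\frac{1}{2\frequenza{\vettore{k}}^2}\,b_+b_-\cos(\frequenza{\vettore{k}}(x_1^0-x_2^0))\,e^{i\vettore{k}(\vettore{x}_1-\vettore{x}_2)}$, but your more structural justification of why only the double contraction $\tfrac12 D_{12}^2$ survives, and your discussion of the limit interchanges and of consistency with Proposition~\ref{pr:clustering}, are welcome elaborations of the same argument.
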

\begin{proof}

Let us consider the case where $V_{h,\chi}= \frac{1}{2}\int h\chi \phi^2 \dvol{\mu}$, more complicated potentials can be treated analogously. By direct computation, we get
\begin{gather*}
\omega^{\beta}(\mathscr{R}_V(F_f)\star\mathscr{R}_V(F_g)\star\alpha_t(K)) -  \omega^{\beta}(\mathscr{R}_V(F_f)\star\mathscr{R}_V(F_g)) \omega^\beta(K) = \\
 \lambda\int \omega_2^\beta(f,y)\omega_2^\beta(g,y)  \dot\chi_-(y^0+t)h(\vettore{y})  \dvol{y^0} \dvol{^3 \vettore{y}} + O(\lambda^2)
\end{gather*}
where $y=(y^0,\vettore{y})$ and $\dot{\chi}_-$ is given in \eqref{eq:time-average-hamiltonian}. Furthermore, $\omega_2^\beta(f,y):=\langle\omega_2^\beta,f\otimes \delta_y\rangle$ is given in terms of $\delta_y$, the Dirac delta function centered in $y$ and it is a smooth function thanks to the Hadamard property.
Using the exponential decay for large spatial separations of the two-point functions $\omega_2^\beta$ given in \eqref{eq-KMS-free} we have
\[
\lim_{h\to1} \int \omega_2^\beta(f,y)\omega_2^\beta(g,y)  \dot\chi_-(y^0+t)h(\vettore{y})  \dvol{y^0} \dvol{^3 \vettore{y}} 
= \int \omega_2^\beta(f,y)\omega_2^\beta(g,y)  \dot\chi_-(y^0+t)  \dvol{y^0} \dvol{^3 \vettore{y}}  =: \langle F_t ,f\otimes g\rangle.
\]
We shall now study the form of the distribution $F_t$. In particular, 
\eqref{eq-KMS-free} implies that
\begin{gather*}
F_t(x_1,x_2) 
=
\frac{1}{(2\pi)^6}
\int\dvol{\vettore{^3y}}\dvol{y^0}\dot\chi_-(y^0+t)
\\
\cdot
\prod_{j=1}^2
\int\dvol{^3\vettore{k}_j}
\left(b_+(\vettore{k}_j)
\frac{e^{i\frequenza{\vettore{k}_j}(x_j^0-y^0) }}{2\frequenza{\vettore{k}_j}}
+b_-(\vettore{k}_j)
\frac{e^{-i\frequenza{\vettore{k}_j}(x_j^0-y^0) }}{2\frequenza{\vettore{k}_j}}
\right)e^{-i\vettore{k}_j(\vettore{x}_j-\vettore{y})}.
\end{gather*}
The integral in $\dvol{\vettore{y}}$ gives a delta contribution which forces $\vettore{k}_1+\vettore{k}_2=0$.
In the product between the various modes there is $y_0$-independent contribution which remains unaffected by the $\tau$-translation:
\begin{gather}
b_+b_-\left(
\frac{e^{i\frequenza{\vettore{k}}(x_1^0-x_2^0) }}{4\frequenza{\vettore{k}}^2} + \frac{e^{-i\frequenza{\vettore{k}}(x_1^0-x_2^0) }}{4\frequenza{\vettore{k}}^2}
\right)=
\frac{1}{2\frequenza{\vettore{k}}^2}b_+(\vettore{k})b_-(\vettore{k})\cos(\frequenza{\vettore{k}}(x_1^0-x_2^0)).
\end{gather}
The other contributions are proportional to oscillatory phases $\sim e^{i\frequenza{\vettore{k}}\tau}$ and disappear in the limit of large times.
Note that this is guaranteed only in presence of the time average.
Summing up we find
\begin{gather}
w(x_1,x_2):=\lim_{T\to+\infty}\frac{1}{T}\int_{0}^T\dvol{t}\int\dvol{y}\dot\chi_-(y^0+t) \kms{x_1,y}\kms{x_2,y}
\nonumber\\
=\frac{1}{(2\pi)^3}\int\dvol{^3\vettore{k}}
\frac{1}{2\frequenza{\vettore{k}}^2}b_+(\vettore{k})b_-(\vettore{k})\cos(\frequenza{\vettore{k}}(x_1^0-x_2^0))
e^{i\vettore{k}(\vettore{x}_1-\vettore{x}_2)}
\label{eq:2pt-rep}
\end{gather}
where in the last equality we were able to perform the integral of $y^0$ thanks to the form of $\dot\chi_-(y^0+t)$ given in \eqref{eq:time-average-hamiltonian}.
Hence, at first order in perturbation theory,
\begin{gather*}
\lim_{T\to\infty}\lim_{h\to1} \frac{1}{T}\int \dvol{t} 
\left(
\omega^{\beta}(\mathscr{R}_V(F_f)\star\mathscr{R}_V(F_g)\star\alpha_t(K)) -  \omega^{\beta}(\mathscr{R}_V(F_f)\star\mathscr{R}_V(F_g)) \omega^\beta(K) 
\right)
= \\
\lambda w(f,g)+O(\lambda^2).
\end{gather*}
which is in general non-vanishing. We have thus the proof of the proposition.
\end{proof}

{\bf Remark} Notice that the proof of the previous proposition can be directly applied also to the case of a $\lambda \phi^4$ theory over a massive KMS state. 
Actually, in that case, when the $\star_{\omega^\beta}$ product is used to describe the product of the theory, the interacting potential acquires the known contribution called thermal mass. More precisely, from \eqref{eq:gamma} we have that
\[
\gamma_{w_\beta}^{-1} (\lambda\phi^4) = \lambda \left(\phi^4 + M_\beta \phi^2 + C\right).
\]
The distribution $\omega^\beta(x_1,y)\omega^\beta(x_2,y)$ discussed in the proof of Proposition \ref{pr:failure-clustering} can be graphically depicted in the following way
\begin{center}
\begin{tikzpicture}[thick,scale=1] 
\draw[GFleche] (0,0)--(-0.5,1);
\draw[GFleche] (0,0)--(.5,1);
\filldraw (0,0) circle (.7pt) node[below] {$y$};
\filldraw (-.5,1) circle (.7pt) node[above] {$x_1$};
\filldraw (0.5,1) circle (.7pt) node[above] {$x_2$};
\end{tikzpicture}
\end{center}
The essential point in the proof of the previous proposition is in the fact that after having performed the spatial integration over the whole $\vettore{y}-$space a non-vanishing contribution which is constant in $y^0$ remains. Hence, the time average over back-in-time translations $y^0\to y^0-t$ is not vanishing. 
Operating in a similar way, one sees that when more lines are attached to the point $y$, like in
\begin{center}
\begin{tikzpicture}[thick,scale=1] 
\draw[GFleche] (0,0)--(0,1);
\draw[GFleche] (0,0)--(-0.5,1);
\draw[GFleche] (0,0)--(.5,1);
\filldraw (0,0) circle (.7pt) node[below] {$y$};
\filldraw (-.5,1) circle (.7pt) node[above] {$x_1$};
\filldraw (0,1) circle (.7pt) node[above] {$x_2$};
\filldraw (0.5,1) circle (.7pt) node[above] {$x_3$};
\filldraw (1,.5) node[right]{$=\kms{x_1,y}\kms{x_2,y}\kms{x_3,y}$};
\end{tikzpicture}
\end{center}
the corresponding contributions vanish essentially because of the Riemann-Lebesgue lemma.
Coming back to the elements which give non-vanishing contributions we observe that even if the vertex 
$y$ is substituted by two points joined by some internal propagators, due to the momentum conservation, its ergodic mean in the large time limit is again 
non-vanishing.
 
As an example we could consider in a $\lambda \phi^4$ theory with vanishing thermal mass the contribution
\begin{center}
\begin{tikzpicture}[thick,scale=1] 
\draw[GFleche] (0,0)--(-0.5,1);
\draw[GFleche] (1,0)--(1.5,1);
\draw (0,0)--(1,0);
\draw[bend left] (0,0) edge (1,0);
\draw[bend right] (0,0) edge (1,0);

\filldraw (0,0) circle (.7pt) node[below] {$y_1$};
\filldraw (1,0) circle (.7pt) node[below] {$y_2$};
\filldraw (-0.5,1) circle (.7pt) node[above] {$x_1$};
\filldraw (1.5,1) circle (.7pt) node[above] {$x_2$};
\end{tikzpicture}
\end{center}

The large time limit of $\omega^{V,+}_T$ given in \eqref{eq:average-limit-1} is even more problematic than this. Actually,
in the expansion at higher orders in $K$, there are new contributions which do not converge even if the ergodic mean is considered. We shall see an explicit example in the following proposition

\begin{proposition}\label{pr:infrared-divergences}
Consider a quadratic interaction Lagrangian in the adiabatic limit $(h\to 1)$. Consider the quadratic field $A=\int_M f \phi^2 d^4x$ where $f\in \mathcal{D}(M)$ and $\int dt f(t,\vettore{x}) = 0$ for every $\vettore{x}$.
The contribution
\[
Q^{(n)}_T(A) = \frac{1}{T}\int_0^T \dvol{t_{n+1}} \int_0^{t_{n+1}} \dvol{t_n}  \dots \int_0^{t_2} \dvol{t_1}  \omega^\beta([\alpha_{t_1}(K),\dots, [\alpha_{t_n}(K),\alpha_{t_{n+1}}(A)]]\dots ]) 
\]
to the ergodic mean $\omega^{V,+}_T(A)$ in \eqref{eq:average-limit-1} does not converge for $n\geq 3$ in the sense of perturbation theory for large $T$, if the adiabatic limit is taken in advance.
\end{proposition}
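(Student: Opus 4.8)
The plan is to reduce $Q^{(n)}_T(A)$, once the adiabatic limit has been taken, to an absolutely convergent integral over a single loop momentum $\vettore{k}$ times a nested integral over the time simplex, and then to show that the time integral produces a secular (polynomial) growth in $T$ which the ergodic prefactor $1/T$ cannot absorb as soon as $n\geq3$.

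First I would expand the iterated commutator exactly as in the proof of Proposition~\ref{pr:clustering2}: $\kms{[\alpha_{t_1}(K),[\dots,[\alpha_{t_n}(K),\alpha_{t_{n+1}}(A)]\dots]]}$ is a finite weighted sum over connected oriented graphs with vertices the $n$ insertions of $K$ and the single insertion of $A$, the lines carrying either the causal propagator (from a commutator) or the thermal two point function $\omega_2^\beta$ (from the final expectation). Since $K$ and $A$ are quadratic in $\phi$, every vertex has two legs and the connected graphs are simple cycles through all $n+1$ vertices. I would then send $h\to1$ inside each graph. As in Proposition~\ref{pr:failure-clustering}, the spatial integration at every $K$--vertex forces conservation of the spatial momentum, so a single loop momentum $\vettore{k}$, hence a single frequency $\frequenza{\vettore{k}}$, runs around the cycle; integrating the internal time of the vertex sitting at external time $t_i$ against its profile ($\dot\chi_-$ for a $K$--vertex, $f$ for the $A$--vertex) yields a factor $e^{i a_i\frequenza{\vettore{k}} t_i}$ times a $\vettore{k}$--density, where $a_i\in\{0,\pm2\}$ is the net frequency at that vertex, i.e. the difference of the frequency signs of its two adjacent lines.

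The hypothesis $\int \dvol{t}\, f(t,\vettore{x})=0$ enters precisely here: it makes the $A$--vertex contribute only through its nonzero--frequency part, so $a_{n+1}\neq0$ while $\sum_i a_i=0$. The integrand of the simplex thus factorizes as $W(\vettore{k})\,\prod_{i} e^{i a_i \frequenza{\vettore{k}} t_i}$, with $W$ absolutely integrable thanks to the exponential decay of $b_\pm(\vettore{k})$ in \eqref{eq-KMS-free}; consequently the whole $\vettore{k}$--integral is finite and the only possible source of divergence is the time behaviour. Carrying out $\int_0^{t_{n+1}}\dvol{t_n}\cdots\int_0^{t_2}\dvol{t_1}$ by the obvious recursion, the unique term whose total phase cancels against $a_{n+1}$ (the only one surviving the subsequent $T\to\infty$ average) carries a power $t_{n+1}^{\,q}$, with $q$ the number of vanishing partial sums $a_1+\dots+a_j=0$, $1\le j\le n$. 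The ergodic mean $\tfrac1T\int_0^T\dvol{t_{n+1}}$ then turns this into a contribution of order $T^{\,q}$, so that $Q^{(n)}_T(A)$ fails to converge whenever a noncancelling term with $q\geq1$ is present. Choosing the frequency signs so that every $K$--vertex except the one of largest time has $a_i=0$ gives partial sums that vanish up to $j=n-1$, i.e. candidate divergences of order as high as $T^{\,n-1}$.

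The hard part, where the real work lies, is to prove that these secular terms do not cancel out after summing over graphs and over the frequency--sign assignments, and to pin down the threshold. Several cancellations must be accounted for: the fully contracted (``c--number'') structures, in which both legs of some $K$--vertex are closed by $\omega_2^\beta$, cancel in pairs because of the antisymmetrized combination $\omega_2^\beta(z,x)\omega_2^\beta(z,y)-\omega_2^\beta(x,z)\omega_2^\beta(y,z)$ intrinsic to the commutator; the identity $b_+(\vettore{k})-b_-(\vettore{k})=1$ collapses many of the surviving weights; and, since the $n$--fold commutator is self--adjoint for even $n$ and anti--self--adjoint for odd $n$, $Q^{(n)}_T(A)$ is real (resp. purely imaginary), which forces the lowest secular coefficients into combinations that vanish by themselves. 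One therefore has to track these cancellations order by order, exhibit one explicit frequency--sign configuration whose $\vettore{k}$--integrated coefficient is genuinely nonzero and still grows at least linearly in $T$ after the ergodic average, and check that the first order at which such a surviving term appears is exactly $n=3$ --- consistently with the graphical discussion following Proposition~\ref{pr:failure-clustering}, where the resonant insertions at a common vertex survive the time average. Producing a single noncancelling term of order $T^{\,q}$ with $q\geq1$ then proves that $Q^{(n)}_T(A)$ does not converge for $n\geq3$.
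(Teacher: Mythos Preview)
Your setup --- graph expansion, reduction to a single loop momentum after $h\to1$, phases $e^{ia_i\omega_{\vettore{k}}t_i}$ at each vertex --- matches the paper's. The divergence mechanism you invoke, however, is not the one the paper uses, and the gap between them is exactly where your argument stalls.

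You look for \emph{resonant} terms: configurations in which, after the inner simplex integrations, the residual phase at $t_{n+1}$ cancels, leaving a pure power $t_{n+1}^{\,q}$ that the ergodic mean turns into $T^{q}$. To get the stated threshold $n\geq 3$ you then have to argue that such terms cancel for $n\leq 2$ and survive at $n=3$; you call this the ``hard part'' and leave it undone. In fact your candidate configuration $a_1=\dots=a_{n-1}=0$, $a_n=-a_{n+1}\neq 0$ corresponds, in the paper's decomposition over pairs $(I,J)$, to $I=\{n\}$, $J=\{n+1\}$; the paper asserts (from the explicit recursive structure $F_{n+1}^{(n)}=\tfrac12 m(D_{12}^2-D_{21}^2)(\alpha_{t_1}(H)\otimes B_n)$) that among single--element $I,J$ only $I=\{1\}$, $J=\{n+1\}$ (or the swap) survives. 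So the term you single out is not obviously present, and your argument for the threshold reduces to a cancellation bookkeeping you do not carry out.

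The paper's route avoids this entirely by exploiting an \emph{oscillatory} term. The leading contribution has phase $e^{\pm 2i\omega_{\vettore{p}}(t_1-t_{n+1})}$; the simplex integral of this over $TS_{n+1}$ is oscillatory in $\vettore{p}$ with amplitude growing like $T^{n}$, and then the $\vettore{p}$--integral supplies a further decay $T^{-3/2}$ by stationary phase (dispersion of the massive propagator, as in Proposition~\ref{pr-decay-omega-beta}). Together with the ergodic $1/T$ this gives $T^{\,n-5/2}$, which is divergent precisely for $n\geq 3$. You never mention this dispersive $T^{-3/2}$, and without it your power counting cannot produce the half--integer exponent that fixes the threshold. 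If you want to repair your argument along the paper's lines, the missing step is to keep the oscillatory boundary terms from the simplex integration rather than discarding them, and then to do the $\vettore{p}$--integral last using stationary phase.
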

\begin{proof}
We can compute $Q^{(n)}_T(A)$ with a graph expansion as in the proof of Proposition \ref{pr:clustering2}.
Here, we discuss the main points which permits to prove the proposition.
Only connected oriented graphs are present in the graph expansion of $Q^{(n)}_T(A)$.
\begin{gather*}
Q^{(n)}_T(A) =\!\!\!\! \sum_{G\in \mathcal{G}_{n+1}^{o,c}} \!\!\!\!\frac{c_G}{T} \!\!\int\limits_{T S_{n+1}}\!\!\!\!\dvol{t_1}\dots \dvol{t_{n+1}}\! \left. m \!\!\!\!\prod_{l\in E(G)}\!\! D_{s(l)r(l)}\! \left(\alpha_{t_1}(K)\!\otimes\! \dots\! \otimes\! \alpha_{t_n}(K) \!\otimes \alpha_{t_{n+1}}\!(A)\right) \right|_{(\phi_1,\dots \phi_{n+1})=0}\\
=:\frac{1}{T}\int_{T S_{n+1}}\dvol{t_1}\dots \dvol{t_{n+1}}\; F_{n+1}(t_1,\dots , t_{n+1})
\end{gather*}
where $c_G$ is a numerical factor which can also vanish. Furthermore, $D_{ij}$ is defined as in \eqref{eq:differential} and
 $E(G)$ is the set of the lines of $G$, $s(l)$ and $r(l)$ in $\{1,\dots, n+1\}$ indicates the source and the range of the line $l$.
At lowest order in perturbation theory, which is $n$ for $Q^{(n)}_T(A)$ because $K$ introduced in \eqref{eq:time-average-hamiltonian} is at least linear in the perturbation parameter $\lambda$, $\lambda K=\lambda H+O(\lambda^2)$, only oriented connected graphs with $n+1$ vertices and $n+1$ lines contribute to $Q^{(n)}_T(A)$. Hence, every vertex is either the source or the range of exactly two lines.
Actually, at order $n$ in perturbation theory,
\[
F_{n+1}^{(n)}(t_1,\dots, t_{n+1}) = \frac{1}{2}m(D_{12}^2-D_{21}^2)\left( \alpha_{t_1}(H)\otimes   B_{n}(t_2,\dots, t_{n+1})    \right)
\]
where for $n\geq 2$
\[
B_{n}(t_1,\dots, t_{n}):=m(D_{12}-D_{21})\left( \alpha_{t_2}(H) \otimes B_{n-1}(t_2,\dots, t_{n})\right), \qquad   B_{1}(t):=\alpha_{t}(A)\;.
\]
Notice that, for every $n$, $B_n$ are quadratic fields.
Furthermore, notice that
\[
D_{12}-D_{21} = 
\left\langle i\Delta,\frac{\delta^2}{\delta \varphi_1\delta \varphi_2} \right\rangle
\]
where $\Delta(x,y) =-i \omega_2^\beta(x,y)+i\omega_2^\beta(y,x)$ is the causal propagator.  
Since $\Delta(x,y)$ vanishes for points $(x,y)$ with spacelike separation, and since $\frac{\delta H_{h,\chi}}{\delta \phi(x)}$ with $H$ defined in \eqref{eq:time-average-hamiltonian} is timelike compact uniformly in $h$ and $\frac{\delta A}{\delta \phi(x)}$ is compact, we have that for every $n$ and at fixed $t_1,\dots, t_{n}$, $B_{n}(t_1,\dots, t_{n})$ is of compact support. Hence, $B_{n}\in \mathscr{F}_{\mu c}$ even if the adiabatic limit $h\to 1$ is considered. 
For a similar reason the adiabatic limit can be easily taken also in $F_{n+1}^{(n)}$. Actually, $\omega_2^\beta(x,y)\omega_2^\beta(x,z)-\omega_2^\beta(y,x)\omega_2^\beta(z,x)$ which appears in the expansion of $(D_{12}^2-D_{21}^2)$, vanishes if both $x-y$ and $x-z$ are space like vectors.
Notice that, once the spatial Fourier transform is taken, using the form of the two-point function \eqref{eq-KMS-free}, 
$F_{n+1}^{(n)}(t_1,\dots, t_{n+1})$ can be computed directly. 

Here, in order to analyze the integral of $F_{n+1}^{(n)}$ over the simplex $T S_{n+1}$, we further decompose it as a sum over the copies of disjoint subsets of $\{1,\dots, n+1\}$
\begin{gather*}
F_{n+1}^{(n)}(t_1,\dots, t_{n+1}) = \\
\sum_{\substack{I,J\subset \{1,\dots, n+1\}\\ I\cap J = \emptyset}} \int \dvol{^3\vettore{p}} 
\left(e^{i2\frequenza{\vettore{p}}(\sum_{i\in I} t_i-\sum_{j\in J}t_j)}\widehat\Phi_{I,J,+}(\vettore{p}) +
e^{-i2\frequenza{2\vettore{p}}(\sum_{i\in I} t_i-\sum_{j\in J}t_j)}\widehat\Phi_{I,J,-}(\vettore{p})
\right)
\end{gather*}
where $\widehat{\Phi}_{I,J,\pm}(\vettore{p})$ are suitable functions which are rapidly decreasing for large spatial momentum $\vettore{p}$. 

We observe that, the largest contribution in $Q^{(n)}_T(A)$ is obtained when $|I|$ and $|J|$ are small. By direct inspection we notice that when either $I$ and $J$ are empty sets, both $\Phi_{I,J,\pm}$ vanish. When both $I$ and $J$ contain only one element, due to the form of $A$, the only non-vanishing contribution  in the sum is when $I=\{t_1\}$ and $J=\{t_{n+1}\}$ or when 
$I=\{t_{n+1}\}$ and $J=\{t_1\}$. Hence, 
the most divergent contribution for large times $T$ to $Q^{(n)}_T(A)$ is given by 
\begin{gather*}
F_{n+1}^{(n)}(t_1,\dots, t_{n+1}) = \\
C_n  \int \dvol{^3\vettore{p}}  (b_++b_-)   
\left(
\frac{e^{i2\frequenza{\vettore{p}}(t_1-t_{n+1})}}{\frequenza{\vettore{p}}^{n+1}} \widehat\Phi_+(\vettore{p})
+ (-1)^n\frac{e^{-i2\frequenza{\vettore{p}}(t_1-t_{n+1})}}{\frequenza{\vettore{p}}^{n+1}}\widehat\Phi_-(\vettore{p})
\right)+R
\end{gather*}
where $C_n$ is a numerical factor and
$\widehat\Phi_+(\vettore{p})=\widehat{\dot\chi_-}(-\frequenza{\vettore{p}}) \widehat{f}(\frequenza{\vettore{p}},\vettore{p})$
$\widehat\Phi_-(\vettore{p})=\widehat{\dot\chi_-}(\frequenza{\vettore{p}}) \widehat{f}(-\frequenza{\vettore{p}},\vettore{p})$.

The integral over the  simplex $T S_{n+1}$ can be computed before the integration over $\vettore{p}$. It gives an oscillating function whose amplitude grows as $T^n$, times $e^{\pm i \frequenza{\vettore{p}} T}$.  The integration over $\vettore{p}$ gives again an oscillating function whose amplitude decays as $1/T^{3/2}$. Combining these two contributions we have that
the amplitude of $Q^{(n)}_T(A)$ grows as $T^{n-3/2-1}$, hence for $n\geq 3$ $Q^{(n)}_T(A)$ does not converge for large $T$.
\end{proof}

These kind of infrared divergences can be traced back to the difficulties present in the analysis of the existence of adiabatic limit \cite{Altherr, Landsman}.
In the literature it is claimed that, they can be tamed by partial resummations of the perturbative series. However, these kind of resummations are beyond perturbation theory.

\section{A non-equilibrium steady state for the free field theory}

In the previous section we have seen in Proposition \ref{pr:failure-clustering} that the clustering condition does not hold in the adiabatic limit. 
Hence, we expect that, when the ergodic mean of a state converges to another state, these new state is a non-equilibrium steady state \cite{Ru00}. 
We have however seen that some infrared divergences are present in the ergodic mean of $\omega^{\beta}\circ \alpha_\tau^V$, see e.g. Proposition \ref{pr:infrared-divergences}. 
Contrary to this situation Fredenhagen and Lindner have shown in \cite{Lindner, FredenhagenLindner} that a KMS state $\omega^{\beta, V}$ for interacting theory can be constructed also in the adiabatic limit.
Hence, in order to construct an example of a non-equilibrium steady state, we revert the point of view and we analyze the ergodic mean of $\omega^{\beta,V}$ with respect to the free time evolution $\translation{\tau}$
\begin{gather}\label{eq:average-limit-2}
\omega^{+}(A):=\lim_{T\to\infty}\lim_{h\to 1}\frac{1}{T}\int_{0}^T\omega^{\beta,V}({\translation{\tau}(A)})\dvol{\tau}
\end{gather}
which is seen as a state (defined as a formal power series) for the unperturbed theory.

\begin{theorem}\label{th:existence}
The functional $\omega^{+}$ defined in the sense of formal power series in \eqref{eq:average-limit-2}, is a state for the free algebra $\mathscr{F}$. Furthermore, $\omega^{+}$ is invariant under the free evolution $\alpha_t$.
\end{theorem}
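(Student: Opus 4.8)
The plan is to verify the defining properties of a state --- $\mathbb{C}[[\lambda]]$-linearity, normalization, reality and positivity --- together with $\alpha_t$-invariance, by first checking them on the approximants
\[
\omega^{+}_{T,h}(A):=\frac{1}{T}\int_{0}^T\omega^{\beta,V}(\alpha_\tau(A))\,\dvol{\tau},
\]
each of which is manifestly a state for finite $T$ and finite spatial cutoff $h$, being a convex average of the pullbacks $\omega^{\beta,V}\circ\alpha_\tau$ of the interacting KMS state under the $*$-automorphisms $\alpha_\tau$. The content of the theorem is then that the iterated limit $\lim_{T\to\infty}\lim_{h\to1}\omega^{+}_{T,h}$ exists order by order in $\lambda$ and that the state properties survive it.

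Linearity and normalization are immediate: each $\omega^{\beta,V}\circ\alpha_\tau$ is linear and satisfies $\omega^{\beta,V}(\alpha_\tau(1))=1$, so $\omega^{+}_{T,h}(1)=1$ for every $T,h$ and hence $\omega^{+}(1)=1$; reality $\omega^{+}(A^*)=\overline{\omega^{+}(A)}$ is inherited in the same way. The invariance $\omega^{+}(\alpha_t(A))=\omega^{+}(A)$ is the standard Ces\`aro argument: after the shift $\tau\mapsto\tau+t$ one writes $\omega^{+}_{T,h}(\alpha_t(A))=\frac{1}{T}\int_{t}^{T+t}\omega^{\beta,V}(\alpha_\tau(A))\,\dvol{\tau}$, so the difference from $\omega^{+}_{T,h}(A)$ is $\frac{1}{T}\big(\int_T^{T+t}-\int_0^t\big)\omega^{\beta,V}(\alpha_\tau(A))\,\dvol{\tau}$, which vanishes as $T\to\infty$ at each order in $\lambda$ because the integrand is bounded in $\tau$ coefficient-wise.

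The heart of the proof is the existence of the limit, which I would establish by expanding $\omega^{\beta,V}(\alpha_\tau(A))$ through \eqref{eq:interacting-KMS} and exploiting time-translation invariance of $\omega^{\beta}$ to move the $\tau$-dependence onto the generators. Concretely, $\omega^{\beta,c}(\alpha_\tau(A)\otimes\bigotimes_k\alpha_{iu_k}(K))=\omega^{\beta,c}(A\otimes\bigotimes_k\alpha_{iu_k-\tau}(K))$, so at order $n$ one is left with $\int_{\beta S_n}\dvol{U}$ of the ergodic mean in $\tau$ of a connected correlation function with the $K$'s translated back in real time. The decisive structural point --- and the reason for reverting the point of view as in \eqref{eq:average-limit-2} --- is that the auxiliary simplex $\beta S_n$ has fixed size $\beta$, so no growing real-time simplex appears and the only $T$-dependence is the benign average $\frac{1}{T}\int_0^T\dvol{\tau}$. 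As in the computation of $w$ in Proposition \ref{pr:failure-clustering}, this average extracts the $y^0$-independent, zero-frequency contributions coming from spatial-momentum conservation, while the oscillatory terms $\sim e^{\pm i\omega\tau}$ are killed by Riemann--Lebesgue; the adiabatic limit $h\to1$ is controlled beforehand by the exponential spatial decay of the massive two-point function \eqref{eq-KMS-free}, exactly as in the Fredenhagen--Lindner construction. This is precisely the mechanism absent in Proposition \ref{pr:infrared-divergences}, where the real-time simplex $T S_{n+1}$ grows with $T$ and produces the $T^{n-5/2}$ growth; here the limit is finite at every order.

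Finally, positivity is the delicate point, since coefficient-wise limits of positive formal power series need not be positive. I would resolve it by the standard perturbative argument anchored at the leading order: the zeroth order in $\lambda$ of every approximant $\omega^{+}_{T,h}$ coincides with the free KMS state $\omega^{\beta}$, by $\alpha$-invariance of $\omega^{\beta}$ and the fact that $\omega^{\beta,V}=\omega^{\beta}+O(\lambda)$. Hence for any $A=\sum_{k\geq k_0}\lambda^k A_k$ the lowest nonvanishing contribution to $\omega^{+}(A^*\star A)$ is $\lambda^{2k_0}\,\omega^{\beta}(A_{k_0}^*\star A_{k_0})\geq0$, and faithfulness of the free KMS state lets one reduce the order of $A$ whenever this leading term vanishes and then iterate, so that $\omega^{+}(A^*\star A)\geq0$ as an element of $\mathbb{R}[[\lambda]]$. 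I expect the convergence analysis of the previous paragraph --- in particular uniform control of the adiabatic limit order by order together with the clean separation of zero-frequency from oscillatory contributions after the $\tau$-average --- to be the main obstacle, whereas linearity, normalization and invariance are routine and positivity is dealt with by the leading-order device.
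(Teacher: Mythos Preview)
Your overall plan matches the paper's: reduce the state properties to the approximants $\omega^{+}_{T,h}$, which are convex averages of $\omega^{\beta,V}\circ\alpha_\tau$, and concentrate the effort on existence of the iterated limit via the connected-function expansion \eqref{eq:interacting-KMS}. Your Ces\`aro argument for $\alpha_t$-invariance and your leading-order argument for positivity are in fact more explicit than the paper's, which simply asserts that linearity, normalization and positivity are preserved under convex averaging and limits. Your observation that the simplex $\beta S_n$ has fixed size, in contrast with the growing $T S_{n+1}$ of Proposition~\ref{pr:infrared-divergences}, is exactly the structural reason the limit exists here.

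Where you are too quick is precisely where the paper does almost all of its work. Saying that ``the oscillatory terms $\sim e^{\pm i\omega\tau}$ are killed by Riemann--Lebesgue'' and that ``the adiabatic limit is controlled beforehand by Fredenhagen--Lindner'' is a heuristic, not a proof: you must show that after the adiabatic limit the momentum-space integrand is absolutely integrable and that the $T\to\infty$ limit can be taken under the remaining integrals. The paper does this graph by graph. It passes to the form \eqref{eq:da-stimare-2} with spatially translated copies of $R$, decomposes each connected graph into positive/negative-frequency parts $\lambda_\pm$, and shows that the resulting $\hat\Psi(-P,P)$ restricted to the mass shells is rapidly decreasing in all spatial momenta (Proposition~8 of \cite{FredenhagenLindner} handles the $E_+$-directions, the factor $e^{-\beta\omega_{l_-}/2}$ the $E_-$-directions). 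The spatial integrals over $\vettore{x}_1,\dots,\vettore{x}_n$ then produce $n$ momentum-conservation delta functions, and one must check --- via a rank argument on the incidence matrix of the connected graph --- that their product is a well-defined distribution. Only then does the ergodic mean yield the uniformly bounded factor $(1-e^{i\alpha T})/(i\alpha T)$, and dominated convergence finishes the argument. None of these steps is visible in your sketch; in particular the well-posedness of the product of deltas after $h\to1$ is a genuine issue you have not addressed, and without absolute integrability your Riemann--Lebesgue appeal has no traction.
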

\begin{proof}
$\omega^{\beta,V}$ defined in \eqref{eq:interacting-KMS} is a linear functional over the free algebra given in the sense of formal power series. Furthermore, it is normalized by construction. It is also positive again in the sense of formal power series. It is thus a state over the free algebra.
$\omega^{\beta,V}\circ\alpha_t$ is a state because it is the composition of a state with an automorphism of the algebra. 
The properties of being  positive, normalized and linear are  preserved by the ergodic mean of functionals.  

In order to prove that for every $A$ the limit $T\to\infty$ exists let us start analyzing 
\begin{equation}\label{eq:da-stimare}
\omega^{\beta,V}(A) =  \omega^{\beta}(A) +  \sum_{n\geq 1} (-1)^n  \int_0^{\beta} \dvol{u_n} \int_0^{u_n} \dvol{u_{n-1}} \dots   \int_0^{u_{2}} \dvol{u_1} \; \omega^{\beta,c}(A \otimes \alpha_{iu_1}(K)\otimes \dots \otimes \alpha_{iu_n}(K))
\end{equation}
furthermore in \cite{FredenhagenLindner} (see also appendix C in \cite{DHP}) it is shown that the state in the adiabatic limit can be obtained in the following way
\begin{equation}\label{eq:da-stimare-2}
\omega^{\beta,V}(A) =  \omega^{\beta}(A) +  \sum_{n\geq 1} \int_{\beta S_n} \dvol{u_n} \dots \dvol{u_1}  \int_{\mathbb{R}^3} \dvol{^3{\vettore{x}}_1}\dots \int_{\mathbb{R}^3} \dvol{^3{\vettore{x}}_n}  \; \omega^{\beta,c}(A \otimes \alpha_{iu_1,{\vettore{x}}_1}(R)\otimes \dots \otimes \alpha_{iu_n,{\vettore{x}}_n}(R))
\end{equation}
where $R := -\mathscr{R}_{V_{\chi,1}}(H(\dot\chi^- \delta_0))$, $\delta_0$ is the Dirac delta function centered in the origin of $\mathbb{R}^3$ and $H$ is given in \eqref{eq:time-average-hamiltonian}.
We are thus interested in analyzing  
\begin{gather}
\omega^{\beta,V}(\alpha_T(A)) =  \omega^{\beta}(\alpha_T(A)) 
\notag
\\
\label{eq:mean-ergodic}
+  \sum_{n\geq 1} \int_{\beta S_n} \dvol{u_n} \dots \dvol{u_1}  \int_{\mathbb{R}^3} \dvol{^3{\vettore{x}}_1}\dots \int_{\mathbb{R}^3} \dvol{^3{\vettore{x}}_n}  \; \omega^{\beta,c}(\alpha_T(A) \otimes \alpha_{iu_1,{\vettore{x}}_1}(R)\otimes \dots \otimes \alpha_{iu_n,{\vettore{x}}_n}(R))
\end{gather}
for large values of $T$.
Hence, let us study how the following function, related to the integrand of \eqref{eq:mean-ergodic}, depends on $T$
\[
F_n(u_0-iT,\vettore{x}_0; u_1,\vettore{x}_1;\dots  ;u_n,\vettore{x}_n)
:=
\omega^{\beta,c}(\alpha_{iu_0+T}(A) \otimes \alpha_{iu_1,{\vettore{x}}_1}(A_1)\otimes \dots \otimes \alpha_{iu_n,{\vettore{x}}_n}(A_n)).
\]
In the first part of the following analysis we follow the proof of Theorem 4 in \cite{FredenhagenLindner}. For completeness, we shall just recall the main steps. 
Due to the integration domain in \eqref{eq:da-stimare}
we are interested in the case 
\begin{equation}\label{eq:order}
0=u_0 < u_1 <\dots < u_n < \beta 
\end{equation}
and $\vettore{x}_0=0$.

Furthermore, without loosing generality, 
we might restrict our attention to the case where ${u_{i+1}}-{u_i}\leq \frac{\beta}{2}$ for every $i$. Actually, if for some $m<n$, $u_{m+1}-u_m > \beta/2$, \eqref{eq:order} implies that $u_j-u_0 < \beta/2$ for every $j\leq m$ and  $\beta-u_j<\beta/2$ for $j\geq m+1$. By the KMS condition we have that
\begin{gather*}
\omega^{\beta,c}(\alpha_{iu_0+T}(A) \otimes \alpha_{iu_1,{\vettore{x}}_1}(A_1)\otimes \dots \otimes \alpha_{iu_n,{\vettore{x}}_n}(A_n))\\
=\omega^{\beta,c}(\alpha_{iu_{m+1},\vettore{x}_{m+1} } (A_{m+1}) \otimes \dots \otimes \alpha_{iu_n,{\vettore{x}}_n}(A_n)
\otimes \alpha_{i\beta+iu_0+T}(A) \otimes \dots \otimes \alpha_{i\beta+iu_m,{\vettore{x}}_m}(A_m) )\\
=:F_n'(u_{m+1},\vettore{x}_{m+1}; \dots ;u_n,\vettore{x}_n;  u_0+\beta-iT,\vettore{x}_0; u_1+\beta,\vettore{x}_1;\dots  ;u_m+\beta,\vettore{x}_m)
\end{gather*}
and now, the argument of the new function $F'_n$ have the desired property, actually $u_{m+1} < \dots < u_n < \beta+u_0 < \dots <\beta+u_m $ and $\beta+u_m-u_{m+1} < \beta/2$, hence, indicating by $\tilde{u}_i$ the new arguments of $F_n'$ we have that $\tilde{u}_j-\tilde{u}_i <\beta/2$ for every $i<j$.

From the definition of connected functions, it descends that $F_n$ can be decomposed as a sum over $\mathcal{G}^c_{n+1}$ the set of connected graphs with $n+1$ vertices
\begin{gather*}
F_n(u_0-iT,\vettore{x}_0; u_1,\vettore{x}_1;\dots  ;u_n,\vettore{x}_n) = \\ 
\sum_{G\in \mathcal{G}^c_{n+1}} \prod_{i<j} \left.\left( \frac{D_{ij}^{l_{ij}}}{l_{ij}!} \right)
(
\alpha_{iu_0+T}(A) \otimes \alpha_{iu_1,{\vettore{x}}_1}(A_1)\otimes \dots \otimes \alpha_{iu_n,{\vettore{x}}_n}(A_n)
)
\right|_{(\phi_0,\dots,\phi_n)=0}\\
=:
\sum_{G\in \mathcal{G}^c_{n+1}} \prod_{i<j} \frac{1}{\text{Symm}(G)} F_{n,G}(u_0-iT,\vettore{x}_0; u_1,\vettore{x}_1;\dots  ;u_n,\vettore{x}_n)
\end{gather*}
where $l_{ij}$ denotes the number of lines joining the vertices $i$ and $j$ in $G$ and $\text{Symm}(G)$ is a suitable numerical factor. 
In the proof of Theorem 4 in \cite{FredenhagenLindner}, $F_{n,G}$ is then expanded as   
\begin{gather}\label{eq-positive-negative}
F_{n,G}(u_0,{\vettore{x}}_0; u_1,\vettore{x}_1;\dots  ;u_n,\vettore{x}_n) =  \int \dvol{P} \prod_{l\in E(G)}
\frac{e^{i\vettore{p}_l(\vettore{x}_{s(l)}-\vettore{x}_{r(l)})} (\lambda_{+}(p_l)+\lambda_{-}(p_l))}{2\omega_l(1-e^{-\beta \omega_l})}
\hat{\Psi}(-P,P)
\end{gather}
where $E(G)$ is the set of lines of the graph $G$, $s(l)$  and $r(l)$ are respectively the indexes of the source and the range of the points joined by the line $l$.
\[
\Psi(X,Y) = 
\left.
\prod_{l\in E(G)} \frac{\delta^2}{\delta \phi_{s(l)}(x_l)\delta \phi_{r(l)}(y_l)} (A_0\otimes \dots \otimes A_n)
\right|_{(\phi_0,\dots,\phi_n)=0}
\]
so that $X$ and $Y$ are for $(x_1,\dots ,x_k)$ and $(y_1,\dots ,y_k)$ and $k$ indicates the number of lines in $E(G)$. $\hat{\Psi}(-P,P)$ is the Fourier transform of $\Psi(X,Y)$ and hence $P=(p_1,\dots ,p_k)$. 
Moreover, the positive and negative frequency part in $D_{ij}$ are indicated by
\begin{equation}\label{eq:lambdas}
\lambda_+(p_l) = e^{-\omega_l (u_{r(l)}-u_{s(l)})}\delta(p_l^0-\omega_l), \qquad
\lambda_-(p_l) = e^{\omega_l (u_{r(l)}-u_{s(l)}-\beta)}\delta(p_l^0+\omega_l), \qquad
\end{equation}
with $\omega_l=\sqrt{\vettore{p}_l^2+m^2}$. The sum over the positive and negative frequency parts present in \eqref{eq-positive-negative} can be further expanded as follows
\begin{gather}
F_{n,G}(u_0,{\vettore{x}}_0; u_1,\vettore{x}_1;\dots  ;u_n,\vettore{x}_n) =  \sum_{\{E_+,E_-\}\in P_2(E(G))}\int \dvol{P} 
\prod_{l_+\in E_+} \frac{e^{i\vettore{p}_{l_+}(\vettore{x}_{s(l_+)}-\vettore{x}_{r(l_+)})} \lambda_{+}(p_{l_+})}{2\omega_{l_+}(1-e^{-\beta \omega_{l_+}})}
\notag
\\
\label{eq:decomp}
\cdot
\prod_{l_-\in E_-} \frac{e^{i\vettore{p}_{l_-}(\vettore{x}_{s(l_-)}-\vettore{x}_{r(l_-)})} \lambda_{-}(p_{l_-})}{2\omega_{l_-}(1-e^{-\beta \omega_{l_-}})}
\hat{\Psi}(-P,P)
\end{gather}
where $P_2(E(G))$ is the set of all possible partitions of $E(G)$ into two disjoint subsets $E(G)=E_+ \cup E_-$ (which can be empty) one for the positive and one for the negative part.
Notice that for every $l\in E(G)$, $u_{r(l)}-u_{s(l)}\leq \beta/2$, hence, 
the argument of the exponential in $\lambda_-(p_{l_-})$ is always bigger then $-\beta/2 \omega_{l_-}$, that is
\[
e^{\omega_{l_-}(u_{r(l)}-u_{s(l)}-\beta)} \leq e^{-\omega_{l_-} \beta/2}.
\]
The function $\hat{\Psi}(-P,P)$ is the Fourier transform of a compactly supported distribution, hence, it is an entire analytic function which grows at most polynomially in every direction $(-P,P)$, hence,
\[
\hat{\Phi}(P):=\prod_{l_-\in E_-} e^{\omega_{l_-}(u_{r(l_-)}-u_{s(l_-)}-\beta)}\hat{\Psi}(-P,P)
\]
is rapidly decreasing in every direction containing negative frequencies (at least one $p_{l_-}\in P$ has $p^0_{l_-}<0$). 
Furthermore, by Proposition 8 in \cite{FredenhagenLindner}, $\hat{\Psi}(-P,P)$ is of rapid decrease in the directions $P$ contained in $(V^+)^k$ the $k-$fold product of the forward light cone.   
Finally, since, the $\delta$ functions in \eqref{eq:lambdas} forces 
$p_{j}, \forall j\in E_+$ and $p_{k}, \forall k\in E_-$
%$p_{l_+}$ and $p_{l_-}$ 
to be respectively on the positive and negative mass shell, 
we have that 
\[
\hat{\Phi}_m(\vettore{P}):= \left. \prod_{l_-\in E_-} e^{\omega_{l_-}(u_{r(l_-)}-u_{s(l_-)}-\beta)}\hat{\Psi}(-P,P)  
\right|_{p_{j}^0=\pm \omega_{j}, \forall j \in E_{\pm}}
%\right|_{p_{l_+}^0=\omega_{l_+},p_{l_-}^0=-\omega_{l_-}}
\]
is of rapid decrease in every spatial momenta $\vettore{P}$. In particular, this implies that, every integrand in \eqref{eq:decomp} is absolutely integrable.
Furthermore, the spatial integrals in \eqref{eq:da-stimare-2} can be performed to obtain
\begin{gather*}
\int_{\mathbb{R}^3} \dvol{^3{\vettore{x}}_1}\dots \int_{\mathbb{R}^3} \dvol{^3{\vettore{x}}_n} \;
F_{n,G}(u_0-iT,0; u_1,\vettore{x}_1;\dots  ;u_n,\vettore{x}_n)\\
=
\left.\sum_{\{E_+,E_-\}\in P_2(E(G))}\int \dvol{\vettore{P}} 
\prod_{l_+\in E_+} \frac{e^{-\omega_{l_+}(u_{r(l_+)}-u_{s(l_+)})}}{2 \omega_{l_+}(1-e^{-\beta \omega_{l_+}})}
\prod_{l_-\in E_-} \frac{e^{\omega_{l_-}(u_{r(l_-)}-u_{s(l_-)}-\beta)}}{2\omega_{l_-}(1-e^{-\beta \omega_{l_-}})}
\hat{\Psi}(-P,P) 
\right|_{\substack{p_{j}^0=\pm\omega_{j}, \\ \forall j\in E_\pm}}
%\right|_{\substack{p_{j}^0=\omega_{j}, j\in E_+\\p_{j}^0=-\omega_{j},j\in E_-}}
%\right|_{\substack{p_{l_+}^0=\omega_{l_+},\\p_{l_-}^0=-\omega_{l_-}}}
\\
\cdot
\prod_{i\in \{1,\dots, n\}} \delta\left(\sum_{\substack{l\in E(G) \\ s(l)=i}} \vettore{p}_l-\sum_{\substack{l\in E(G) \\ r(l)=i}} \vettore{p}_l\right)
\prod_{\substack{e_+\in E_+ \\ s(e_+)=0}} e^{-iT \omega_{e_+}}\prod_{\substack{e_-\in E_- \\ s(e_-)=0}} e^{iT \omega_{e_-}} 
\end{gather*}
the delta functions over the linear combinations of various $p_l$ is the spatial momentum conservation at all but one entry of the tensor product. 
Since $G$ is a connected graph with $n+1$ vertices, the number of lines, $k$ in $G$, is always larger or equal to $n$. 
The integration over $\vettore{P}$ is thus an integration over $k-$spatial momenta. The products of $n$ delta functions over some linear combination of various $\vettore{p}_l$ is thus a well defined distribution provided the $n$ linear combinations $\vettore{c}_i:=\sum_{l\in E(G), s(l)=i} \vettore{p}_l-\sum_{l\in E(G),r(l)=i} \vettore{p}_l$ with $i\in\{1,\dots, n\}$ are independent in a neighborhood of the support of the delta functions.  The latter condition is again ensured by the fact that the graph $G$ is connected. 
This fact can be proved checking the maximality of the rank of the $3n\times 3k$ matrix formally defined as $J_{n,k}:=\{\frac{\partial \vettore{c}_i}{\partial \vettore{p}_l}\}_{i\in \{1,\dots, n\}; l\in \{1,\dots, k\}}$. For graphs of $n+1$ points with $n$ lines this can be proven by induction on the number of points $n$. Let $G_n$  be such a graph, then there is always at least one point labeled by $i\neq 0$ which is reached by only one line. If we remove that point and that line from the graph $G_n$ we obtain another connected graph $G_{n-1}$ of $n$ points with $n-1$ lines.   The corresponding matrices $J_{n-1,n-1}$ and $J_{n,n}$ are such that $\det J_{n,n}=\pm J_{n-1,n-1}$. Since $\det J_{1,1}=\pm 1$ this finishes the proof in that case.
Finally, if $k$, the number of lines in the connected graph $G$ with $n+1$ points, is larger than $n$, it is always possible to find a connected subgraph $G'_n$ which has exactly $n$ lines. In this case $J_{n,n}$ is just a submatrix of $J_{n,k}$ and the maximality of the rank of $J_{n,k}$ is ensured by that of $J_{n,n}$.

Furthermore, the ergodic mean of the previous expression gives
\begin{gather*}
\lim_{T\to\infty}\frac{1}{T} \int_0^T d\tau
\int_{\mathbb{R}^3} \dvol{^3{\vettore{x}}_1}\dots \int_{\mathbb{R}^3} \dvol{^3{\vettore{x}}_n} \;
 F_{n,G}(u_0-i\tau,0; u_1,\vettore{x}_1;\dots  ;u_n,\vettore{x}_n)\\
=
\lim_{T\to\infty}\!\!
\left.\sum_{\{E_+,E_-\}\in P_2(E(G))}\int \dvol{\vettore{P}} 
\!\!\!\prod_{l_+\in E_+} \frac{e^{-\omega_{l_+}(u_{r(l_+)}-u_{s(l_+)})}}{2 \omega_{l_+}(1-e^{-\beta \omega_{l_+}})}
\prod_{l_-\in E_-} \!\!\frac{e^{\omega_{l_-}(u_{r(l_-)}-u_{s(l_-)}-\beta)}}{2\omega_{l_-}(1-e^{-\beta \omega_{l_-}})}
\hat{\Psi}(-P,P) 
\right|_{\substack{p_{j}^0=\pm\omega_{j}, \\ \forall j\in E_\pm}}
%\right|_{\substack{p_{l_+}^0=\omega_{l_+},\\p_{l_-}^0=-\omega_{l_-}}}
\\
\cdot
\prod_{i\in \{1,\dots, n\}} \delta\left(\sum_{\substack{l\in E(G) \\ s(l)=i}} \vettore{p}_l-\sum_{\substack{l\in E(G) \\ r(l)=i}} \vettore{p}_l\right)\frac{1-e^{iT\left(
\sum_{\substack{e_-\in E_- \\ s(e_-)=0}} \omega_{e_-} 
-\sum_{\substack{e_+\in E_+ \\ s(e_+)=0}} \omega_{e_+}
\right)}}{iT \left(
\sum_{\substack{e_-\in E_- \\ s(e_-)=0}} \omega_{e_-} 
-\sum_{\substack{e_+\in E_+ \\ s(e_+)=0}} \omega_{e_+}
\right)}
\end{gather*}
Notice that, $(1-e^{i\alpha T})/(\alpha T)$ is bounded by a constant uniformly in $\alpha$ and $T$. Hence, after applying the delta functions, the limit $T\to\infty$ can be taken before the integral over the remaining momenta. This limit vanishes unless  
$
\sum_{\substack{e_-\in E_-, s(e_-)=0}} \omega_{e_-} 
-\sum_{\substack{e_+\in E_+, s(e_+)=0}} \omega_{e_+}
 = 0 
$
on a set of non-zero measure over the remaining momenta, if any. In the latter case it furnishes a finite result. 
\\
Due to absolute convergence, the ergodic mean and the corresponding limit for $T\to \infty$ of \eqref{eq:mean-ergodic} can be taken before the integral over $(u_1,\dots, u_n)\in \beta S_n$. Hence we have the result.
\end{proof}

The expression obtained at the last step of the previous proof and the fact that there are cases where it is non-vanishing shows that in general $\omega^+$ is different from $\omega^\beta$.
This is another indirect evidence of the failure of the clustering condition under the adiabatic limit established in Proposition \ref{pr:failure-clustering}.
Actually,
the failure of the clustering property under the adiabatic limit, shows that the state $\omega^{+} \neq \omega^{\beta}$. We shall now see that 
$\omega^{+}$ does not satisfy the KMS condition with respect to $\alpha_t$.
\begin{theorem}
$\omega^{+}$ does not satisfy the KMS condition with respect to $\alpha_t$.
\end{theorem}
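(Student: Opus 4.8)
The plan is to test the KMS condition on the simplest nontrivial observables, the smeared free linear fields $\phi(f),\phi(g)$, and to show that the two-point function of $\omega^+$ cannot satisfy the associated detailed-balance relation for any inverse temperature. Since $\omega^+$ is a state on the free algebra $\mathscr{F}$, its two-point function $\omega_2^+(x,y):=\omega^+(\phi(x)\star\phi(y))$ has antisymmetric part fixed by the canonical commutation relations, $\omega_2^+(x,y)-\omega_2^+(y,x)=i\Delta(x,y)$, to all orders in $\lambda$; moreover, by Theorem \ref{th:existence} the functional $\omega^+$ is $\alpha$-invariant, so $\omega_2^+$ is invariant under time translations. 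Writing $\omega_2^+=\omega_2^\beta+\lambda\, w+O(\lambda^2)$, the correction $w$ is therefore symmetric, $w(x,y)=w(y,x)$, because the antisymmetric part is $\lambda$-independent.

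First I would extract $w$ explicitly. Inserting the expansion \eqref{eq:da-stimare-2} into \eqref{eq:average-limit-2} with $A=\phi(f)\star\phi(g)$, the order-$\lambda$ contribution is governed by the connected function $\omega^{\beta,c}(\alpha_\tau(\phi(f)\star\phi(g))\otimes\alpha_{iu}(R))$, whose fully connected part contracts each field of $\phi(f)\star\phi(g)$ with the quadratic vertex $R$; this is exactly the mechanism already analysed in Proposition \ref{pr:failure-clustering}. After the ergodic mean in $\tau$ kills the oscillatory phases and the adiabatic limit is taken, the surviving, $y^0$-independent part produces a nonzero smooth bisolution supported on the mass shell, with equal positive- and negative-frequency weights. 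Concretely, in the representation \eqref{eq-KMS-free}, $\omega_2^+$ carries amplitudes $a_\pm(\vettore{k})=b_\pm(\vettore{k})+\lambda\, c(\vettore{k})+O(\lambda^2)$ on $e^{\pm i\frequenza{\vettore{k}}(t_x-t_y)}$, with the symmetry $c_+=c_-=:c$ forced by the fixed commutator, and $c(\vettore{k})\propto b_+(\vettore{k})b_-(\vettore{k})/\frequenza{\vettore{k}}^2$ nonvanishing for generic $f,g$ (this is the nonvanishing of $w(f,g)$ already observed in Proposition \ref{pr:failure-clustering}).

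The KMS condition at inverse temperature $\beta'$, tested on $\phi(f),\phi(g)$, is equivalent to the detailed-balance relation $a_-(\vettore{k})=e^{-\beta'\frequenza{\vettore{k}}}a_+(\vettore{k})$. At order $\lambda^0$ this forces $\beta'=\beta$, since $b_-=e^{-\beta\frequenza{\vettore{k}}}b_+$ fixes the temperature uniquely; any other constant $\beta'$ already fails at zeroth order. Imposing the relation at order $\lambda^1$ with $\beta'=\beta$ then requires $c=e^{-\beta\frequenza{\vettore{k}}}c$, i.e.\ $c\,(1-e^{-\beta\frequenza{\vettore{k}}})=0$, which is impossible since $c\neq 0$ and $\frequenza{\vettore{k}}\geq m>0$. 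Hence $\omega^+$ is not a $\beta$-KMS state, and being distinguished from every other free KMS temperature already at order $\lambda^0$, it is not KMS with respect to $\alpha_t$ at any inverse temperature. Allowing even a formal-power-series temperature $\beta'=\beta+\lambda\beta_1+\dots$ does not help: the order-$\lambda$ relation then forces $\beta_1$ to be a function of $\vettore{k}$ rather than a constant, because the profile of $c$ is $\propto b_+b_-/\frequenza{\vettore{k}}^2$ and not $\propto \frequenza{\vettore{k}}\,e^{-\beta\frequenza{\vettore{k}}}b_+^2$.

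The main obstacle is the explicit identification of the first-order correction $w$ to $\omega_2^+$ from the graph expansion of Theorem \ref{th:existence}: one must check that, after the ergodic mean and the adiabatic limit, it is nonzero and mass-shell supported with the symmetric profile claimed. Once this is in hand the detailed-balance obstruction is immediate and, in fact, robust: only the nonvanishing of $w$ together with the rigidity of the KMS fluctuation--dissipation relation -- which ties the symmetric part of the two-point function to its $\lambda$-independent antisymmetric part through $\coth(\beta'\frequenza{\vettore{k}}/2)$ -- is needed to conclude.
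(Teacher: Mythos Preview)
Your approach is essentially the same as the paper's: both extract the first-order correction $w$ to the two-point function from the mechanism of Proposition~\ref{pr:failure-clustering} and then show it violates the KMS relation. The paper does this by an explicit position-space check, computing $w(x_1,x_2+i\beta e)-w(x_2,x_1)$ from the integral kernel \eqref{eq:2pt-rep} and observing it is nonzero; you instead phrase the same obstruction in momentum space as a detailed-balance statement, using the clean observation that the commutator fixes the antisymmetric part so the correction $w$ is symmetric and hence cannot satisfy $a_-=e^{-\beta'\omega_{\vettore{k}}}a_+$. Your version has the small bonus of systematically excluding all inverse temperatures $\beta'$ (including formal-power-series deformations of $\beta$), whereas the paper's direct computation tests only $\beta'=\beta$; otherwise the two arguments coincide.
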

\begin{proof}
The state $\omega^{\beta}$ is a KMS state with respect to $\alpha_t$. Hence let us test the KMS condition on the difference
\[
w(A):=\omega^{\beta}(A)-\omega^{+}(A)=\omega^{\beta}(A)-\lim_{T\to\infty}\frac{1}{T}\int_0^T \dvol{\tau} \; \omega^{\beta,V}(\alpha_\tau(A)) 
\]
recalling \eqref{eq:interacting-KMS} and \eqref{eq:pert-dyn} we notice that the contribution of order $0$ in $\lambda$ in the previous expression vanishes, furthermore,  operating as in the proof of Theorem \ref{th:stability-compact} we have that 
\[
w(A) = \lim_{T\to\infty}\frac{1}{T}\int_0^T \dvol{\tau} \; \int_0^\beta \dvol{u} \; \omega^{c,\beta}(A\otimes \alpha_{iu-\tau}(K)) +O(\lambda^2)
\]
Hence, we have that $w(A)$ is related to the failure of averaged clustering condition given in Proposition \ref{pr:failure-clustering}.  
Let us now specialize the analysis of $w(A)$ for $A=\mathscr{R}_V(F_f)\star\mathscr{R}_V(F_g)$ and $V$ is a quadratic potential.
Operating as in the proof of Proposition \ref{pr:failure-clustering}, we have that  
\[
w(\mathscr{R}_V(F_f)\star\mathscr{R}_V(F_g)) = w(f,g)   + O(\lambda^2)
\]
where $w(f,g)$ has the form already given in \eqref{eq:2pt-rep}. The translation $\alpha_{iu}$ present in $w(A)$ has no effect at first order in perturbation theory.
The function $t\mapsto w(f,g_t)$ does not satisfy the KMS property. Actually, the integral kernel of $w$ has the form
\[
w(x_1,x_2)=\beta\int\dvol{^3\vettore{k}}
\frac{1}{4\frequenza{\vettore{k}}^2}\frac{\cos(\frequenza{\vettore{k}}(x_1^0-x_2^0))}{(\cosh(\beta \frequenza{\vettore{k}})-1 )}
e^{i\vettore{k}(\vettore{x}_1-\vettore{x}_2)}
\]
by direct inspection we see that the map $t\mapsto w(f,g_t)$ can be analytically continued to the strip $\Im(t)\in [0,\beta]$, however, 
\begin{gather*}
w(x_1,x_2+i\beta e)-w(x_2,x_1)\\=\beta\int\dvol{^3\vettore{k}}
\frac{1}{4\frequenza{\vettore{k}}^2}
\left(\cos(\frequenza{\vettore{k}}(x_1^0-x_2^0)) +i\sin(\frequenza{\vettore{k}}(x_1^0-x_2^0)) \tanh\left(\frac{\beta \frequenza{\vettore{k}}}{2}\right) \right)
e^{i\vettore{k}(\vettore{x}_1-\vettore{x}_2)}
\end{gather*}
where the four vector $e=(1,0,0,0)$.
Hence, the KMS condition for $\omega^{+}$ does not hold. 
\end{proof}

Notice that, when $V$ describes a perturbation of the mass square $m^2$ to  $m^2+\delta m^2$, namely when $V$ is quadratic in the field and no field derivatives are present, the state $\omega^+$ can be constructed exactly. In this case, the mode decomposition of the two-point function associated with $\omega^{\beta,V}$, looks like \eqref{eq-KMS-free}, furthermore, we have that the action of $\translation{\tau}$ and the corresponding time averaged used in the definition of $\omega^+$ in \eqref{eq:average-limit-2}, transforms the modes $e^{i\omega_{\vettore{k}} t}$ of the $m^2+\delta m^2$ theory in the modes of the corresponding free theory of mass $m$. This procedure does not alter the form of the Bose factor $b_+$, which is the Bose factor at square mass $m^2+\delta m^2$. It is thus clear that $\omega^+$ cannot be a KMS state.

In the general case, 
even if the state $\omega^+$ we have obtained is invariant under time translations it is not a KMS state, hence, it can be seen as a 
NESS for a massive scalar field theory. 
To analyze the thermodynamical properties of NESS in the context of $C^*-$dynamical system,  the notion of entropy production was introduced by 
Ojima et al. \cite{Oj0,Oj1,Oj2} and by Jak{\v s}i\'c and Pillet \cite{JP01,JP02}. A direct generalization of this concept to the present case seems not to be possible due to the presence of infrared divergences. Despite this fact, we expect that when spatial densities are considered 
some of the known results can be recovered. In any case, further investigations are needed. 

\section{Conclusion and open questions}

In this paper, we have tested some of the stability notions known for $C^*-$dynamical systems in the case of interacting quantum scalar field theories treated by means of perturbation theory.
In particular, whenever  $V_{\chi,h}$ is a compactly supported interaction Lagrangian,
we have shown that the extremal free KMS state $\omega^\beta$ for a massive Klein Gordon field evolved with the perturbed dynamics $\alpha^V$, in the limit $t\to \infty$ 
tends to the interacting KMS state $\omega^{\beta,V}$ recently constructed in \cite{FredenhagenLindner}, i.e. the return to equilibrium $\lim_{t\to \infty}\omega^\beta\circ \alpha_t^V= \omega^{\beta,V}$ holds.
Conversely, if the adiabatic limit is considered this is not anymore true. Furthermore, an ergodic (temporal) mean applied to $\omega^\beta\circ \alpha^V$, is not sufficient to repair the return to equilibrium. 
Actually, in this case, the limit of the ergodic mean is not even convergent in the sense of perturbation theory. 
For this reason we have reverted the point of view, studying $\omega^{\beta,V}\circ \alpha_t$. Even if constructed perturbatively, this can be seen as a state for the free theory. It holds that the ergodic mean of this family of states converges to a non-equilibrium steady state $\omega^+$ \cite{Ru00}. 

Some of the thermodynamical properties of the states discussed in the present paper can be addressed directly.
In particular, the state  $\omega^{\beta,V}$ seen as a state over the free algebra, is a perturbation of $\omega^{\beta}$. We might thus evaluate the relative entropy in the two states extending the Araki definition, see also the discussion present in \cite{BR}, to this case: 
\[
S(\omega^\beta|\omega^{\beta,V}) := -\beta\omega^\beta(K) -\log \left(\omega^\beta(U(i\beta))\right).
\]
Notice that, although in the present case  $S(\omega^\beta|\omega^{\beta,V})$ is defined in the sense of perturbation theory, 
 some nice properties of the relative entropy which holds in the context of $C^*-$algebras are still valid.
In particular, both $S(\omega^\beta|\omega^{\beta,V})\leq 0$ and $S(\omega^{\beta,V}|\omega^\beta)\leq 0$.
However, while $S(\omega^\beta|\omega^{\beta,V})$ is finite when $V$ is of spatial compact support, it diverges under the adiabatic limit. 
In \cite{Lindner} it has been argued that spatial densities of similar quantities are finite and preserve some of the thermodynamical 
properties.
It is nevertheless an open question if an entropy production density with similar properties as those studied in \cite{JP01,JP02} can be defined 
for $\omega^+$.

Another open question, which remains to be answered, is about the infrared divergences shown in the analysis of the limit $\omega^\beta\circ \alpha_t^V$ for large $t$. Since $U_V$ is formally unitary, it should be possible to treat some of them with certain partial resummation methods, see e.g. \cite{Landsman, LeBellac}. 
Finally, a comparison of the asymptotic behavior of the interacting thermal two-point function discussed by Bros and Buchholz in \cite{BB02} with the results presented in section \ref{sec:inst} could throw some light on the existence of the limit of the ergodic mean of $\omega^\beta\circ \alpha_t^V$ given in \eqref{eq:average-limit-1} in the non perturbative regime.

\subsection*{Acknowledgments}
%We are grateful to K. Fredenhagen for helpful discussions on the subject and to M. Fr\"ob for useful comments on an earlier version of this paper.
We are grateful to K. Fredenhagen for helpful discussions on the subject, to M. Fr\"ob for useful comments on an earlier version of this paper and to T.-P. Hack for suggesting us a proof of Proposition A.1.

\appendix

\section{Some technical propositions}
\begin{proposition}\label{pr-decay-omega-beta}
Consider the two-point function of the extremal free massive KMS state at inverse temperature $\beta$ on the Minkowski spacetime given in \eqref{eq-KMS-free}. If $y-x$ is a timelike future pointing vector,  
\begin{equation}\label{eq:est-2pt}
\left|D^{(\alpha)}\omega_2^\beta(x; t_y+t, \vettore{y})\right|\leq \frac{C_\alpha}{t^{3/2}}  , \qquad t > 1,  
\end{equation}
where $\alpha\in \mathbb{N}^8$ is a multiindex and $D^{(\alpha)}$ indicates the composition of partial derivatives of order $\alpha_i$ along the $i-$th direction in $M^2$  for various $i\in\{1,\dots,8\}$.
\end{proposition}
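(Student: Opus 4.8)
The plan is to recognise $\omega_2^\beta(x;t_y+t,\vettore{y})$ as a superposition of two oscillatory integrals and to extract the decay by the stationary phase method. Writing $\tau:=t_x-t_y-t$ and $\vettore{r}:=\vettore{x}-\vettore{y}$, formula \eqref{eq-KMS-free} gives
\[
\omega_2^\beta(x;t_y+t,\vettore{y})=\frac{1}{(2\pi)^3}\int\dvol{^3\vettore{k}}\,\frac{1}{2\frequenza{\vettore{k}}}\Big(b_+(\vettore{k})\,e^{i\frequenza{\vettore{k}}\tau}+b_-(\vettore{k})\,e^{-i\frequenza{\vettore{k}}\tau}\Big)e^{-i\vettore{k}\cdot\vettore{r}},
\]
so I must estimate integrals $I_\pm=\int\dvol{^3\vettore{k}}\,a_\pm(\vettore{k})\,e^{iS_\pm(\vettore{k})}$ with phase $S_\pm(\vettore{k})=\pm\frequenza{\vettore{k}}\tau-\vettore{k}\cdot\vettore{r}$. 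Since $y-x$ is future timelike we have $t_y-t_x>|\vettore{r}|\geq0$, so $|\tau|=t+(t_y-t_x)>t$ grows linearly in $t$ while $\vettore{r}$ stays fixed; hence the velocity $\vettore{v}:=\vettore{r}/\tau$ obeys $|\vettore{v}|<1$ and $\vettore{v}\to0$. First I would note $\nabla_{\vettore{k}}S_\pm=\tau\big(\pm\vettore{k}/\frequenza{\vettore{k}}-\vettore{v}\big)$, so the phase has a unique stationary point $\vettore{k}_\ast$ determined by $\vettore{k}_\ast/\frequenza{\vettore{k}_\ast}=\pm\vettore{v}$, which lies in a fixed ball for $t$ large (indeed $\vettore{k}_\ast\to0$), with non-degenerate Hessian because $\partial^2\frequenza{\vettore{k}}$ is positive definite (equal to $\delta_{ij}/m$ at $\vettore{k}=0$). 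The massiveness $m>0$ also guarantees that $a_\pm$ and all their derivatives are smooth and bounded near $\vettore{k}=0$, so there is no infrared obstruction.

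Next I would reduce the derivatives $D^{(\alpha)}$ to the same framework: each spacetime derivative acts on the exponentials only, inserting a factor $\pm i\frequenza{\vettore{k}}$ (time derivatives) or $\mp ik_j$ (space derivatives), so $D^{(\alpha)}\omega_2^\beta$ has the same form with amplitude $a_\pm$ replaced by $P_\alpha(\vettore{k},\frequenza{\vettore{k}})\,b_\pm(\vettore{k})/(2\frequenza{\vettore{k}})$, where $P_\alpha$ is a polynomial of degree $|\alpha|$. The $b_-$ contribution is then immediate: $b_-(\vettore{k})=e^{-\beta\frequenza{\vettore{k}}}b_+(\vettore{k})$ decays exponentially, so its amplitude is Schwartz and the standard stationary phase lemma (e.g.\ H\"ormander) gives a bound $C_\alpha|\tau|^{-3/2}$ at once.

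The work is in the $b_+$ contribution, whose amplitude does not decay and in fact grows like $|\vettore{k}|^{|\alpha|-1}$ at large momenta. Here I would insert a smooth partition of unity $1=\chi+(1-\chi)$ with $\chi$ supported in a fixed ball containing $\vettore{k}_\ast$ and equal to $1$ on a neighbourhood of it. On $\supp\chi$ the amplitude and all its derivatives are bounded and the phase has a single non-degenerate critical point, so stationary phase yields the leading $C_\alpha|\tau|^{-3/2}\leq C_\alpha' t^{-3/2}$ behaviour, uniformly because $\vettore{k}_\ast$ stays in a compact set. On $\supp(1-\chi)$ there is no critical point and $|\nabla_{\vettore{k}}S_\pm|=|\tau|\,|\!\pm\!\vettore{k}/\frequenza{\vettore{k}}-\vettore{v}|\geq c|\tau|$ (using $|\vettore{k}/\frequenza{\vettore{k}}|\to1$ and $|\vettore{v}|\leq1/2$ for large $t$), so I would integrate by parts $N$ times with the operator $L^{t}=\nabla\!\cdot\!\big(\tfrac{\nabla S_\pm}{i|\nabla S_\pm|^2}\,\cdot\,\big)$. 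Each application gains a factor $|\tau|^{-1}$ and improves the momentum decay by one power of $|\vettore{k}|$, so choosing $N>|\alpha|+2$ makes the resulting integral absolutely convergent and bounded by $C_{\alpha,N}|\tau|^{-N}$, negligible compared with $t^{-3/2}$. Finally, since the separation is timelike for every $t>0$ the function and its derivatives are smooth and bounded on the compact range $t\in[1,T_0]$; combining this with the asymptotic decay fixes a constant $C_\alpha$ valid for all $t>1$.

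The main obstacle is precisely this high-momentum, non-stationary region for the $b_+$ piece: because its amplitude grows under differentiation, one cannot simply invoke a Schwartz stationary phase statement, and the core of the argument is the bookkeeping showing that repeated integration by parts simultaneously produces inverse powers of $|\tau|$ \emph{and} enough extra momentum decay to restore integrability. Everything else---the non-degeneracy of the Hessian, the location and smoothness of the unique critical point, and the exponential control of the $b_-$ part---is routine once the massive dispersion relation $\frequenza{\vettore{k}}=\sqrt{\vettore{k}^2+m^2}$ is exploited.
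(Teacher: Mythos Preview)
Your stationary phase argument is correct and complete: the phase $S_\pm=\tau(\pm\frequenza{\vettore{k}}-\vettore{k}\cdot\vettore{v})$ has a unique non-degenerate critical point in a fixed ball, the amplitudes are symbols of finite order, and the partition-of-unity plus repeated integration by parts handles the non-integrable large-$\vettore{k}$ tail of the $b_+$ piece exactly as you describe. The uniformity in $t$ follows because $\vettore{v}=\vettore{r}/\tau$ stays in a fixed compact set, so both the location of $\vettore{k}_\ast$ and the lower bound on $|\pm\vettore{k}/\frequenza{\vettore{k}}-\vettore{v}|$ away from it are uniform.

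The paper, however, proceeds quite differently. It splits $\omega_2^\beta=\omega_2^\infty+(\omega_2^\beta-\omega_2^\infty)$. The vacuum piece $\omega_2^\infty$ is written explicitly in terms of the modified Bessel function $K_1$, and its $t^{-3/2}$ decay (and that of its derivatives) is read off from the known large-argument asymptotics of $K_n$. The thermal remainder $\omega_2^\beta-\omega_2^\infty$ is smooth everywhere (Hadamard property), and its integrand carries the Bose factor $(e^{\beta E}-1)^{-1}$, which decays exponentially in $E$; after the change of variable $w=(E-m)t$ and an integration by parts in $w$, one bounds $t^{3/2}(\omega_2^\beta-\omega_2^\infty)$ uniformly. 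Derivatives are then handled by the recursion relations for Bessel functions and by differentiating under the integral in the remainder.

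What each approach buys: your method is more systematic and treats all derivative orders on the same footing, with the symbol calculus making the bookkeeping transparent; it also avoids any appeal to special-function identities. The paper's decomposition is more explicit---one sees the leading coefficient directly from the Bessel asymptotics---and isolates the thermal correction as a separately smooth, rapidly controllable object (following Bros--Buchholz), which is conceptually useful when one wants to distinguish vacuum from thermal behaviour. Both routes lead to the same estimate.
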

\begin{proof} A proof of this proposition can be written following Appendix A in \cite{BB02}. Here for completeness we sketch its main steps.
We know that $\omega_2^\beta$ is an Hadamard two-point function, hence, $\omega_2^\beta-\omega_2^{\infty}$ is smooth. 
If $y-x$ is a timelike future pointing vector, the points $(t_y+t, \vettore{y})$ and $y$ are connected by a timelike geodesic for every $t$, and thus 
$(x; t_y+t, \vettore{y})$ is contained neither in the singular support of $\omega_2^\beta$ nor of $\omega_2^\infty$, hence the two-point functions 
$\omega_2^\beta$ and $\omega_2^\infty$ are both described by a smooth function in a neighborhood of $(x; t_y+t, \vettore{y})$. We recall that the vacuum massive two-point function (see e.g. \cite{BDF}) takes the form
\[
\omega_2^{\infty}(x; t_y+t, \vettore{y})=  4\pi \frac{m}{i \sqrt{(t+t_y-t_x)^2-|\vettore{x}-\vettore{y}|^2}}K_1(i m \sqrt{(t+t_y-t_x)^2-|\vettore{x}-\vettore{y}|^2}) 
\] 
where $K_1$ is the modified Bessel function of second kind and of index $1$. 
Hence, using the asymptotic form of the modified Bessel function, see e.g. 8.451 6. in \cite{Grad}, for large values of $t$, and at fixed $x$ and $y$,  $|\omega_2^{\infty}(x; t_y+t, \vettore{y})|\leq C/t^{3/2}$. 
Let us consider
\[
t^{3/2} (\omega_2^\beta-\omega_2^\infty)(0; t, \vettore{x}) = c t^{3/2} \sum_{\sigma\in\{+1,-1\}}\int_{m}^\infty   \dvol{E} (\sqrt{{E^2-m^2}}) \frac{\sin(\sqrt{E^2-m^2} |\vettore{x}|)}{\sqrt{E^2-m^2} |\vettore{x}|}  
\frac{1}{e^{\beta E}-1} e^{i \sigma E t} 
\]
for $t> |\vettore{x}|$ and for a suitable constant $c$. Performing a change of integration variable $w=(E-m) t$ we obtain 
\[
t^{3/2} (\omega_2^\beta-\omega_2^\infty)(0; t, \vettore{x}) = 
\sum_{\sigma\in\{+1,-1\}}\int_{0}^\infty   \dvol{w} \sqrt{w} 
\; b_{\sigma}\left(\frac{w}{t}\right) e^{i\sigma w }
\]
where $b_\sigma$ is a suitable bounded function which decays rapidly for large values of its argument. The integral in the right hand side of the previous expression can be proven to be bounded uniformly in $t$ operating in the following way.
First of all let us isolate the term $b_{\sigma}(0)$ dividing the integral in two parts 
\begin{gather*}
\int_{0}^\infty   \dvol{w} \sqrt{w} 
\; b_{\sigma}\left(\frac{w}{t}\right) e^{i\sigma w } = 
\\
\lim_{\epsilon\to0^+} 
\int_{0}^\infty   \dvol{w} \sqrt{w} 
\; b_{\sigma}(0) e^{i\sigma w -\epsilon w}  +  
\lim_{\epsilon\to0^+}
\int_{0}^\infty   \dvol{w} \sqrt{w} 
\; \left( b_{\sigma}\left(\frac{w}{t}\right)- b_{\sigma}(0) \right) e^{i\sigma w -\epsilon w}. 
\end{gather*}
The limit $\epsilon \to 0$ of the first term gives a finite result. To prove that the second term is bounded we write 
$e^{i\sigma w -\epsilon w} = {(i\sigma-\epsilon)^{-2}}\partial_w^2 (e^{i\sigma w -\epsilon w}-1)$
and we integrate by parts two times ending up with
\begin{gather*}
\int_{0}^\infty   \dvol{w} \sqrt{w} 
\; \left( b_{\sigma}\left(\frac{w}{t}\right)- b_{\sigma}(0) \right) e^{i\sigma w -\epsilon w}
=
\int_{0}^\infty   \dvol{w}\; w^{-3/2}
(e^{i\sigma w -\epsilon w}-1) 
\;  c_{\sigma}\left(\frac{w}{t}\right) 
\end{gather*}
where $c_\sigma$ is another suitable bounded function. Hence, also these second term is bounded by a constant in time.
We can conclude that 
\[
\left|(\omega_2^\beta-\omega_2^\infty)(x; t_y+t, \vettore{y})\right| \leq \frac{C}{t^{3/2}}
\]
combining both estimates we obtain the result in the case $\alpha=0$.

The proof for the case of a generic $\alpha$, can be obtained in a similar way. To estimate $D^{(\alpha)}\omega_2^\beta$, we observe that when the derivatives are applied to the factor in front of the Bessel function, the decaying for large $t$ is improved. Furthermore, the recursive relations of Bessel functions and their asymptotic properties imply  
\[
\frac{d}{dx}K_n(x) = \frac{n}{x}K_n(x)-K_{n+1}(x),\qquad |K_n(y)|\leq  \frac{c_n}{\sqrt{|y|}}, \qquad y \gg n.
\]
Hence, the decaying rate of $D^{(\alpha)}\omega_2^\beta$ for large $t$ is not worse then that of the case $\alpha=0$. 
To estimate the contribution $D^{(\alpha)}(\omega_2^\beta-\omega_2^\infty)$, we apply the derivatives to 
$\frac{\sin(\sqrt{E^2-m^2} |\vettore{x}|)}{\sqrt{E^2-m^2}|\vettore{x}|}e^{i{\sigma} E t}$, and afterwards we proceed in the same way as for the case $\alpha=0$. 
Again, the decay in $t$ cannot be worse then that of the case $\alpha=0$. 
\end{proof}

\begin{proposition}\label{pr-decay-functional-derivatives}
Let $A,B\in\mathscr{F}_{\mu c}$. Let $\omega_2^\beta$ the two-point function \eqref{eq-KMS-free}. It holds that
\[
\left|\left\langle \omega_2^\beta,\frac{\delta^2}{\delta\phi_1\delta\phi_2} \alpha_{t_1}(A)\otimes \alpha_{t_2}(B)\right\rangle\right| \leq \frac{C}{(|t_1-t_2|+r)^{3/2}}
\]
for every $t_1,t_2$ and for some constants $C,r$ which may depend on $A$ and $B$.
\end{proposition}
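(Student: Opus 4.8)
The plan is to reduce the pairing to a contraction of the fixed compactly supported distributions $A^{(1)}$ and $B^{(1)}$ against $\omega_2^\beta$, to exploit time-translation invariance to collapse the dependence onto the single variable $s:=t_1-t_2$, and then to invoke the decay estimate of Proposition \ref{pr-decay-omega-beta} in the large-$|s|$ regime. First I would set $a:=A^{(1)}$ and $b:=B^{(1)}$, which are compactly supported distributions of finite order whose wave front sets obey the microcausal condition. Since $\alpha_t$ acts by pullback, $(\alpha_{t_1}A)^{(1)}$ and $(\alpha_{t_2}B)^{(1)}$ are simply $a$ and $b$ rigidly translated in time by $t_1$ and $t_2$, and because $\omega_2^\beta$ depends only on the difference of its arguments, a change of variables yields
\[
\left\langle \omega_2^\beta,\frac{\delta^2}{\delta\phi_1\delta\phi_2}\alpha_{t_1}(A)\otimes\alpha_{t_2}(B)\right\rangle
=\left\langle \omega_2^\beta, a_s\otimes b\right\rangle,\qquad s:=t_1-t_2,
\]
so it suffices to bound this by $C(|s|+r)^{-3/2}$.

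For the large-$|s|$ regime I would argue as follows. Once $|s|$ exceeds a threshold fixed by the compact supports of $a$ and $b$, no null geodesic joins $\supp a_s$ to $\supp b$: the two regions are strictly timelike separated, with $\supp a_s$ in the future of $\supp b$ for $s\to+\infty$ and in its past for $s\to-\infty$. Since $(x,y)$ lies in the singular support of $\omega_2^\beta$ only when $x$ and $y$ are null related, $\omega_2^\beta$ is smooth on a neighborhood of $\supp a_s\times\supp b$. The pairing therefore becomes the evaluation of the compactly supported distribution $a_s\otimes b$ on a genuinely smooth function, and by the seminorm estimate for distributions of finite order it is dominated by a finite sum $\sum_{|\alpha|\le N}\|D^{(\alpha)}\omega_2^\beta\|_\infty$ over the relevant compact region, where $N$ is fixed by the orders of $a$ and $b$. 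Proposition \ref{pr-decay-omega-beta} bounds each such derivative by $C_\alpha|s|^{-3/2}$ for large $|s|$ in the future-pointing case; the past-pointing case $s\to-\infty$ follows identically from the same Bessel-function asymptotics, which depend only on the magnitude of the timelike interval. Hence $|\langle\omega_2^\beta, a_s\otimes b\rangle|\le C'|s|^{-3/2}$ for $|s|$ large.

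It remains to control small $|s|$ and to package the two regimes into one inequality. For $s$ in any bounded interval the pairing is well defined by microcausality, since Hörmander's criterion applies to the transverse wave front sets of $a$, $b$ and $\omega_2^\beta$, and it depends continuously on $s$, hence is uniformly bounded there. Choosing $r>0$ so that $(|s|+r)^{-3/2}$ stays below this uniform bound on the bounded interval and above the tail constant on its complement, I would combine the two estimates into the single bound $C(|s|+r)^{-3/2}$, with $C$ and $r$ depending on $A$ and $B$ through the supports and orders of $a$ and $b$.

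The main obstacle is the second step: because $a$ and $b$ are genuine distributions rather than test functions, one cannot integrate them against the singular kernel $\omega_2^\beta$ directly. The decisive point is the geometric fact that a large time separation forces the supports off the light cone, making $\omega_2^\beta$ smooth there; only then does the finite-order seminorm estimate apply and convert the problem into the derivative decay already furnished by Proposition \ref{pr-decay-omega-beta}. Ensuring that the threshold, the order $N$, and the constants remain uniform in $s$ (and, if required, in the base field configuration at which the functional derivatives are evaluated) is the delicate bookkeeping.
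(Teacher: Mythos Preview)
Your proposal is correct and follows essentially the same route as the paper's own proof: separate into the large-$|s|$ regime, where the strictly timelike separation of the supports makes $\omega_2^\beta$ smooth and the finite-order seminorm bound for the compactly supported functional derivatives reduces the estimate to the derivative decay of Proposition \ref{pr-decay-omega-beta}, and the bounded-$|s|$ regime, where the H\"ormander criterion together with continuity gives a uniform bound, and then combine. Your explicit reduction to the single variable $s=t_1-t_2$ via translation invariance of $\omega_2^\beta$ and your remark on the $s\to-\infty$ case are small streamlinings not spelled out in the paper, but the argument is otherwise the same.
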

\begin{proof}
The elements $A$ and $B$ in $\mathscr{F}_{\mu c}$ have compact support by definition. It exits a compact set $\mathcal{C}\subset M$ which contains the supports of both $A$ and $B$. Hence, the action of time translation is such that, the support of $\alpha_t(A)\subset \mathcal{C}_t:=\{(\tau,\vettore{x})\in M | (\tau-t,\vettore{x}) \in \mathcal{C} \}$ which is equal to $\mathcal{C}$ translated of time $t$. 
Since $\mathcal{C}$ is of compact support, if $|t_1-t_2|>d$ with $d$ sufficiently large, there are no lightlike geodesics intersecting both $\mathcal{C}_{t_1}$ and $\mathcal{C}_{t_2}$. $\omega_2^\beta$ is of Hadamard type, hence, when it is restricted to $\mathcal{C}_{t_1}\times\mathcal{C}_{t_2}$
it is described by a smooth function. Since $\frac{\delta^2}{\delta\phi_1\delta\phi_2} \alpha_{t_1}(A)\otimes \alpha_{t_2}(B)$ is a distribution, by continuity we have that, for every $f\in \mathcal{E}(\mathcal{C}_{t_1}\times \mathcal{C}_{t_2})$
\[
\left|\left\langle f,\frac{\delta^2}{\delta\phi_1\delta\phi_2} \alpha_{t_1}(A)\otimes \alpha_{t_2}(B)\right\rangle\right| \leq C_1 \sum_{|\alpha|\leq N}\|D^{\alpha}f\|_\infty 
\]  
where $\alpha$ is a multiindex while $N$ and $C_1$ are two fixed constants. Hence, Proposition \ref{pr-decay-omega-beta} implies that 
\begin{equation}\label{eq:stima-1}
\left|\left\langle \omega_2^\beta,\frac{\delta^2}{\delta\phi_1\delta\phi_2} \alpha_{t_1}(A)\otimes \alpha_{t_2}(B)\right\rangle\right| \leq \frac{C_2}{(|t_1-t_2|)^{3/2}}
\end{equation}
for every $|t_1-t_2| > d$ and for some constant $C_2$.  For every $|t_1-t_2|\leq d$, the product of the distributions $\omega_2^\beta$ and $\frac{\delta^2}{\delta\phi_1\delta\phi_2}\alpha_{t_1}(A)\otimes \alpha_{t_2}(B)$ is well defined because, $\omega_2^\beta$ is an Hadamard two-point function and $A$ and $B$ are in $\mathscr{F}_{\mu c}$, thus the H\"ormander criterion for multiplication of distribution is satisfied. Furthermore, from the support properties of $A$ and $B$ we have that $\omega_2^\beta \cdot \frac{\delta^2}{\delta\phi_1\delta\phi_2}\alpha_{t_1}(A)\otimes \alpha_{t_2}(B)$ is in $\mathcal{E}'(M^2)$. By continuity we have that
\begin{equation}\label{eq:stima-2}
\left|\left\langle \omega_2^\beta,\frac{\delta^2}{\delta\phi_1\delta\phi_2} \alpha_{t_1}(A)\otimes \alpha_{t_2}(B)\right\rangle\right| \leq C_3, 
\end{equation}
for every $|t_1-t_2| \leq d$.
Combining \eqref{eq:stima-1} and \eqref{eq:stima-2} we have the result.
\end{proof}

\end{document}